\newtheorem{theorem}{Theorem}
\definecolor{cvprblue}{rgb}{0.21,0.49,0.74}
\title{ECVC: Exploiting Non-Local Correlations in Multiple Frames for\\Contextual Video Compression}
\author{    Wei Jiang, Junru Li,
Kai Zhang, Li Zhang\textsuperscript{\Letter}\\
Bytedance\\
{\tt\small \{jiangwei.lvc, lijunru, zhangkai.video, lizhang.idm\}@bytedance.com}
}
\begin{document}
\maketitle
\begin{abstract}
    In Learned Video Compression (LVC), improving inter prediction, such as enhancing temporal context mining and mitigating accumulated errors, is crucial for boosting rate-distortion performance. 
    Existing LVCs mainly focus on mining the temporal movements while
    neglecting non-local correlations among frames. Additionally, 
    current contextual video compression models use a single reference frame, which is insufficient for handling complex movements.
    To address these issues, we propose leveraging non-local correlations across multiple
    frames to enhance temporal priors, significantly boosting rate-distortion performance.
    To mitigate error accumulation, we introduce a partial cascaded fine-tuning strategy that supports fine-tuning on full-length sequences with constrained computational resources. This method reduces the train-test mismatch in sequence lengths and significantly decreases accumulated errors.
    Based on the proposed techniques, we present a video compression scheme ECVC.
  Experiments demonstrate that our ECVC achieves state-of-the-art performance,
  reducing $10.5\%$ and $11.5\%$ more bit-rates than previous SOTA method DCVC-FM
  over VTM-13.2 low delay B (LDB) under the intra period (IP) of $32$ and $-1$\footnote[1]{Only one intra frame when compressing a sequence.}, respectively.
  \end{abstract}    
\section{Introduction}
Video coding aims to compactly represent the visual signals while maintaining acceptable reconstructed quality.
Traditional video coding, such as H.266/VVC~\cite{bross2021overview}, is established on the block-based hybrid coding framework, which has been developed for several decades.
\begin{figure}[t]
  \centering
  \subfloat{
    \includegraphics[scale=0.14]{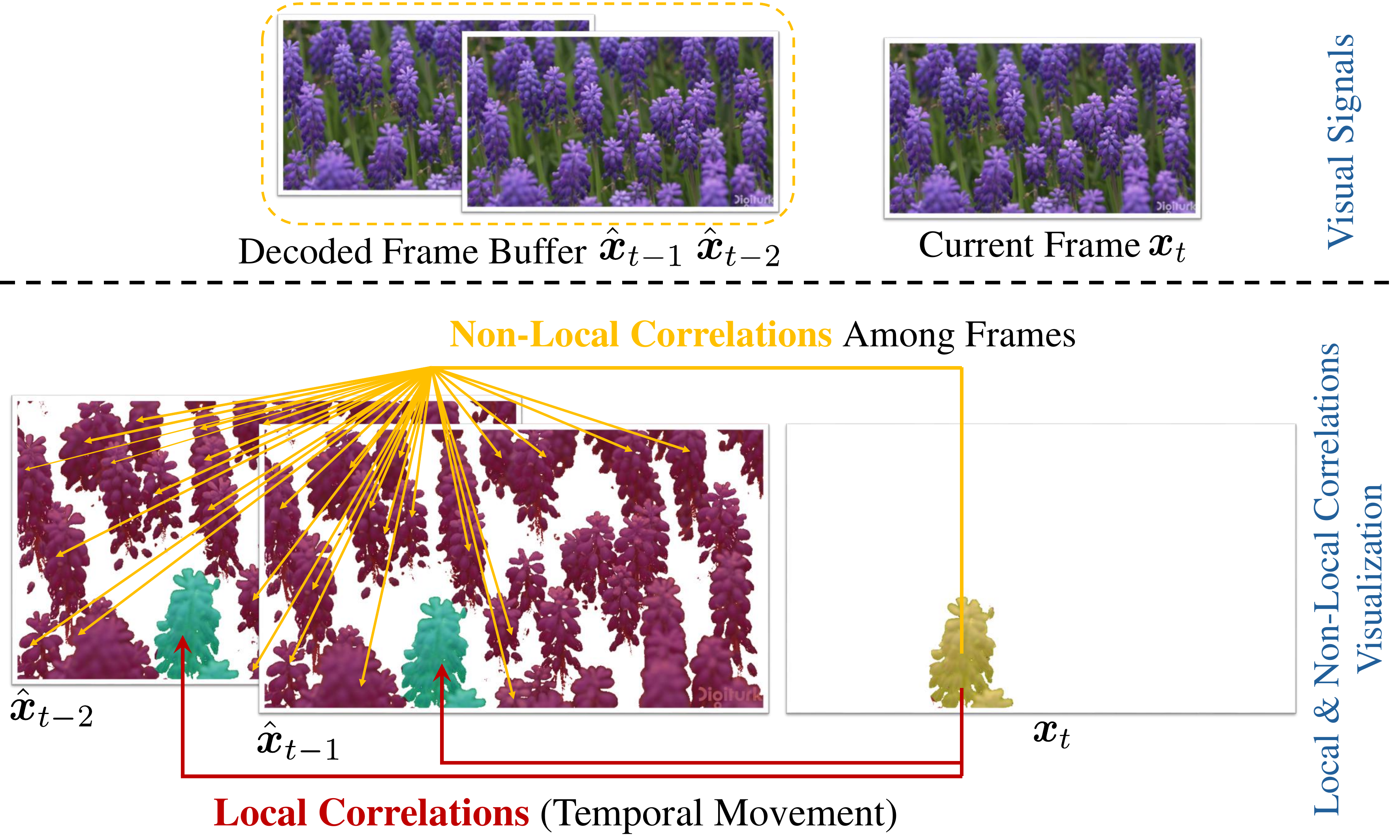}}
  \caption{Visualization of local (red lines) correlations $\hat{\boldsymbol{C}}^{\ell}$ and non-local (orange lines) correlations $\hat{\boldsymbol{C}}^{n\ell}$
  in sequence ``HoneyBee'' in UVG dataset~\cite{mercat2020uvg}.
  $\hat{\boldsymbol{C}}^{n\ell}$ is defined as the distant correlations without explicit movement between the current frame and reference frames.
  $\hat{\boldsymbol{C}}^{n\ell}$ is beneficial for temporal context mining due to lower conditional entropy for $\boldsymbol{x}_t$,
  formulated as $\mathcal{H}(\boldsymbol{x}_t|\hat{\boldsymbol{C}}^{\ell},\hat{\boldsymbol{C}}^{n\ell})\leq \mathcal{H}(\boldsymbol{x}_t|\hat{\boldsymbol{C}}^{\ell})$, where $\mathcal{H}$ is the Shannon entropy.}
  \label{fig:ls}
\end{figure}
However, as the performance improvement of traditional codecs approaches a plateau,
learned video compression (LVC) has emerged as a promising alternative, attracting significant attention from both academic and industrial fields.
The end-to-end optimization~\cite{balle2016end,balle2018variational,mentzer2022vct,li2021deep,lu2019dvc,lu2024deep,yang2023introduction,ma2019image} of LVCs offers 
the potential to surpass handcrafted traditional codecs with redundancy removal.\par
Early LVCs~\cite{lu2019dvc,hu2021fvc, guo2023learning,liu2020learned, rippel2021elf, yang2020learning,yang2024adaptive,pourreza2021extending,pourreza2023boosting,lin2022dmvc,yang2023insights,lu2020end} follow the residual coding paradigm, 
wherein the residuals between the predicted frame (via estimated motion) and the current frame, along with motion information, are compressed.
Recently, the conditional coding paradigm~\cite{li2021deep} has emerged,
which leverages temporal information as a prior, enabling a reduction in conditional entropy and offering more flexibility compared to the predefined subtraction used in residual coding.
Subsequent research has explored various techniques~\cite{li2022hybrid,ho2022canf,sheng2022temporal,xiang2023mimt,Qi_2023_CVPR,zhai2024hybrid,yang2024ucvc,jiang2024lvc,sheng2024spatial,alexandre2023hierarchical,wang2024exp,chen2024maskcrt,chen2024bcanf, li2023neural, li2024neural,guo2023enhanced}  to enhance the performance of conditional coding.
Notably, recent DCVC-DC~\cite{li2023neural} has outperformed the H.266/VVC low delay B (LDB) coding with an intra period (IP) 
of $32$ by exploiting diverse temporal contexts. 
Additionally, DCVC-FM~\cite{li2024neural} surpasses the H.266/VVC LDB configuration~\cite{bross2021overview} 
in long coding chains, particularly when the IP is $-1$, by employing temporal feature modulation.\par
Inter prediction is a long-standing challenge in video coding, aimed at characterizing temporal movements in videos to eliminate temporal redundancies. This temporal information serves as a beneficial contextual prior for current frame coding. 
Recent advancements in LVC, such as DCVC-DC~\cite{li2023neural} and DCVC-FM~\cite{li2024neural}, have utilized offset diversity~\cite{chan2021understanding} for inter prediction, enabling precise capture of local small-scale movements between frames.
However, existing LVC approaches generally overlook non-local correlations.
For instance, as illustrated in Figure~\ref{fig:ls}, while the movement of one flower can be accurately estimated,
the similarities among different flowers, representing non-local correlations,
remain challenging to capture using optical flow~\cite{ranjan2017optical} / DCN~\cite{dai2017deformable}. 
Moreover, multiple reference frames, a technique commonly adopted in traditional video coding paradigms~\cite{huang2006analysis}, are not incorporated in DCVC-DC or DCVC-FM. Exploiting non-local correlations across multiple frames could potentially enhance model performance.
\par 
To address the limitations on non-local correlations across multiple frames, we propose the \textbf{M}ultiple Frame \textbf{N}on-\textbf{L}ocal \textbf{C}ontext Mining (MNLC) to capture
the local and non-local contexts for the $t$-th frame $\boldsymbol{x}_t$
from two reference frames $\hat{\boldsymbol{x}}_{t-1}, \hat{\boldsymbol{x}}_{t-2}$,
achieving enhanced rate-distortion performance with moderate complexity increases.
Specifically,
regarding the multiple frames' local correlations, the offset diversity~\cite{chan2021understanding}
is employed to capture local correlations in $\hat{\boldsymbol{x}}_{t-1}$. 
The former motion between $\hat{\boldsymbol{x}}_{t-2}$ and $\boldsymbol{x}_{t-1}$ is
reused to exploit local correlations in
$\hat{\boldsymbol{x}}_{t-2}$.
To capture non-local correlations, we propose the \textbf{M}ulti-\textbf{H}ead \textbf{L}inear
\textbf{C}ross \textbf{A}ttention (MHLCA).
Leveraging the flexibility of conditional coding, 
our model learns non-local correlations through cross-attention 
between the current mid-feature during transform and multiple temporal priors. 
The attention mechanism computes similarity among all elements, 
capturing correlations between distant elements. 
Additionally, we employ the linear decomposition of vanilla attention~\cite{vaswani2017attention} to mitigate high complexity.
By incorporating both local and non-local priors from multiple frames,
our approach significantly enhances the performance.\par 
Error propagation is one of the key issues causing quality degradation in inter frame coding, 
particularly in scenarios involving long prediction chains (\textit{e.g.}, video conferencing, monitoring scene).
The coding distortions in the previously coded frames are sequentially propagated and accumulated to the current frame through inter prediction, significantly damaging the coding efficiency of the current frame.
To reduce accumulated prediction errors, DCVC-FM employs propagated feature refreshment, 
significantly outperforming DCVC-DC. 
However, the temporal context refreshment in DCVC-FM is not mature enough in achieving consistent performance improvements, especially on videos with fast movements. 
The primary factor influencing error accumulation is the train-test mismatch in sequence lengths. 
Most existing LVCs are trained with only $6$ to $7$ frames due to the limited computational resources,
whereas the testing sequence length may reach to hundreds of frames.
Although DCVC-FM attempts to train on long sequences, the vanilla training strategy is still a heavy demand for training resources.
To address this issue, we 
propose the \textbf{P}artial \textbf{C}ascaded \textbf{F}inetuning Strategy (PCFS),
enabling finetuning of LVC on full-length sequences, 
thereby significantly reducing accumulated errors within the computational resources budget.
\par
Based on the proposed techniques, we introduce \textbf{E}xploiting Non-Local \textbf{C}orrelations
in multiple frames for 
Contextual \textbf{V}ideo \textbf{C}ompression (ECVC), 
which is established beyond the DCVC-DC. 
Experiments demonstrate that our ECVC achieves state-of-the-art performance,
reducing $7.3\%$ and $10.5\%$ more bit-rates than DCVC-DC and DCVC-FM
over VTM-13.2 LDB, respectively under IP $32$.
Additionally, ECVC reduces $11.5\%$ more bit-rate than DCVC-FM over VTM-13.2 LDB when the IP is $-1$.
The contributions of this paper are summarized as follows:
\begin{itemize}
  \item \textbf{Enhanced Temporal Priors:} We analyze the potential of exploiting spatial \textit{non-local correlations} in multiple frames and propose a novel multiple-frame 
  non-local context mining module. This module enables the model to aggregate more
  temporal priors to boost performance. To our knowledge, we are the \textit{first}
  in the learned video compression community to exploit \textit{non-local correlations}
  in multiple frames.
  \item \textbf{Mitigated Accumulated Errors:} We address the train-test mismatch in sequence length 
  and propose the novel partial cascaded finetuning strategy, 
  enabling finetuning on unlimited-length sequences with error awareness.
  \item \textbf{Experimental Validation:} The ECVC achieves state-of-the-art performance 
  under IP $32$ and IP $-1$ settings. Specifically, 
  our ECVC reduces $10.5\%$ more bit-rates than DCVC-FM~\cite{li2024neural} over VTM-13.2 LDB under IP $32$.
Additionally, ECVC reduces $11.5\%$ more bit-rates than DCVC-FM~\cite{li2024neural} over VTM-13.2 LDB when the IP is $-1$.
\end{itemize}
\section{Related Works}
DVC~\cite{lu2019dvc} is one of the pioneer learned video compression frameworks, wherein an optical flow net~\cite{ranjan2017optical} is employed for motion estimation. Then the residuals between predicted frame and original frame are calculated and compressed along with the motion information by neural networks.
To further enhance the rate-distortion performance, advanced techniques such 
scale-space warping~\cite{agustsson2020scale}, adaptive flow coding~\cite{hu2020improving},
deformable convolutions (DCN)~\cite{dai2017deformable,hu2021fvc}, coarse-to-fine mode prediction~\cite{hu2022coarse} and
pixel-to-feature motion compensation~\cite{shi2022alphavc} are investigated, leading to the improvement of the prediction accuracy.\par
Recently, Li~\textit{et al.}~\cite{li2021deep} proposed the conditional coding framework DCVC, wherein
the temporal information serves as the context for current frame coding instead of explicitly residual coding.
In this way, the network learns the correlations between temporal context and current frame automatically.
The conditional coding is more flexible and thus 
breaks the performance bound of residual coding.
DCVC-TCM~\cite{sheng2022temporal} employs the multi-scale temporal contexts 
and temporal propagation mechanism to exploit more temporal priors.
Li~\textit{et al.}~\cite{li2022hybrid} further enhances the performance via advanced dual spatial contexts
for entropy modeling.
In DCVC-DC~\cite{li2023neural}, a hierarchical quality structure is employed to
alleviate the error propagation.
In DCVC-FM~\cite{li2024neural}, the temporal propagated contexts are periodically refreshed
to further enhance the performance with a long prediction chain
(\textit{e.g.} one intra frame setting).\par
However, existing DCVC series only consider the local correlations in inter prediction. The non-adjacent correlations are ignored, 
which limits the potential for performance improvement of LVCs.
In addition, considering that videos contain complex scenes and motions, such as fast movements~\cite{fan2016motion}, affine motions~\cite{zhang2018improved}, and occlusion~\cite{huang2006analysis}, one single reference frame may not effectively capture such complicated scenario. Multiple reference frames can provide richer motion information.
Moreover, although DCVC-FM achieves remarkable performance under a long prediction chain, 
the temporal context refreshment in DCVC-FM is not mature enough to achieve consistent performance improvements, especially on videos with fast movements. 
There is room for further 
reducing the error propagation. 
\section{Method}
\subsection{Overview of ECVC}
The proposed ECVC builds upon the DCVC-DC~\cite{li2023neural}, 
but focuses more on exploiting non-local correlations in multiple frames. 
The architecture of ECVC is presented in Figure~\ref{fig:arch:elvc} and Figure~\ref{fig:MNLC}.
The method processes the $t$-th frame $\boldsymbol{x}_t$ by firstly converting the propagated features $\hat{\boldsymbol{F}}_{t},\hat{\boldsymbol{F}}_{t-1}$ into multi-scale features $\hat{\boldsymbol{F}}_t^i,\hat{\boldsymbol{F}}_{t-1}^i \in \mathbb{R}^{ \frac{H}{2^i} \times \frac{W}{2^i}\times d^i_f},0\leq i \leq 2,$ targeting at the coding of mid-feature $\boldsymbol{y}^i_t \in \mathbb{R}^{ \frac{H}{2^i}\times \frac{W}{2^i}\times d^i_e}$, where $H, W$ are the height and width. $d_f^i, d_e^i$ denote channel numbers.
A key innovation is the Multiple Frame Non-Local Context Mining, which leverages multiple reference frames to extract local and non-local contexts. 
For local contexts $\hat{\boldsymbol{C}}_{t-1\rightarrow t}^{\ell,i}$ and $\hat{\boldsymbol{C}}_{t-2\rightarrow t}^{\ell,i}$, we use motion vectors $\hat{\boldsymbol{v}}_{t}$ and $\hat{\boldsymbol{v}}_{t-1}$ to capture local contexts from the multi-scale features $\hat{\boldsymbol{F}}_t^i$ and $\hat{\boldsymbol{F}}_{t-1}^i$ via the offset diversity and proposed multi-scale refinement module, yielding priors from \textit{two} reference frames. 
For non-local contexts, the Multi-Head Linear Cross
Attention is proposed to capture non-local contexts $\hat{\boldsymbol{C}}_{t-1\rightarrow t}^{n\ell,i}$ and $\hat{\boldsymbol{C}}_{t-2\rightarrow t}^{n\ell,i}$ among $\boldsymbol{y}^i_t, \hat{\boldsymbol{F}}_t^i, \hat{\boldsymbol{F}}_{t-1}^i$.
Those local and non-local contexts are then used for conditional coding of $\boldsymbol{y}_t^i$. The decoding process mirrors the encoding process but uses $\hat{\boldsymbol{y}}_t^i$ as input instead of $\boldsymbol{y}_t^i$. Besides, a Partial Cascaded Finetuning Strategy (PCFS) is proposed to further mitigate error accumulation, as depicted in Figure~\ref{fig:pcfs}.
\begin{figure}[t]
  \centering
  \subfloat{
    \includegraphics[scale=0.31]{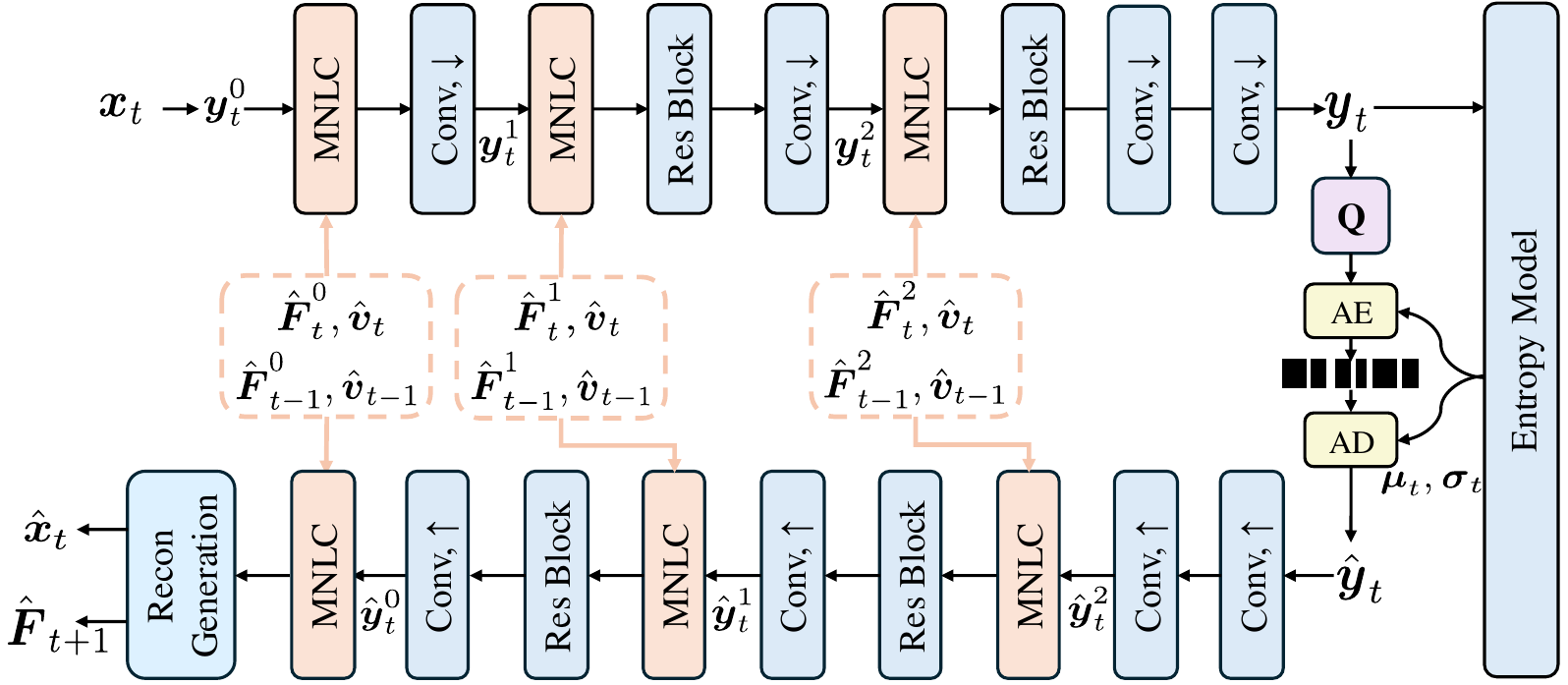}}
  \caption{Illustration of contextual encoder and decoder.
  $\boldsymbol{x}, \hat{\boldsymbol{x}}$ are input frame and reconstructed frame.
  $\hat{\boldsymbol{F}}_{t+1}$ is the propagated feature for coding $\boldsymbol{x}_{t+1}$.
  $\hat{\boldsymbol{v}}_t, \hat{\boldsymbol{v}}_{t-1}$ are motion vectors.
  $\{\boldsymbol{y}_t^0, \boldsymbol{y}_t^1, \boldsymbol{y}_t^2\}, \{\hat{\boldsymbol{y}}_t^0, \hat{\boldsymbol{y}}_t^1, \hat{\boldsymbol{y}}_t^2\}$ 
  are mid-features during encoding and decoding.
  ``Res Block'' and ``Recon Generation'' are adopted from DCVC-DC. ``Q" is quantization. ``AE" denotes arithmetic encoding and ``AD" denotes arithmetic decoding. $\boldsymbol\mu_t$ and $\boldsymbol{\sigma}_t$ 
  are estimated means and scales of $\hat{\boldsymbol{y}}_t$ by the entropy model for AE/AD.}
  \label{fig:arch:elvc}
\end{figure}
\subsection{Exploiting Non-Local Correlations}
As illustrated in Figure~\ref{fig:ls},
there are both local and non-local correlations between the current frame and reference frames.
Existing methods mainly focus on capturing local correlations (\textit{i.e.} temporal movement).
It is desirable to capture both local and
non-local correlations (\textit{i.e.} similar regions without explicit movement) to boost the performance.
In addition, considering that videos contain complex scenes and motions~\cite{huang2006analysis,zhang2018improved}, using a single reference frame may not effectively capture complicated scenarios. Multiple reference frames can provide richer motion information.
To tackle these issues, we propose the Multiple Frame Non-Local Context Mining,
where two reference frames are employed on account of maintaining the complexity.
The process of conditionally coding $\boldsymbol{y}_t^i$ is employed as an example and depicted in Figure~\ref{fig:MNLC}.
Specifically,
the offset diversity 
is adopted and extended for two reference frames. 
$\hat{\boldsymbol{F}}_{t}^i$ is first warped by decoded motion vector $\hat{\boldsymbol{v}}_t$
to $\overline{\boldsymbol{F}}_{t}^i$
and then refined by offset diversity to $\hat{\boldsymbol{C}}_{t-1\rightarrow t}^{\ell,i}$. The process is:
\begin{equation}
  \begin{aligned}
    \hat{\boldsymbol{C}}_{t-1\rightarrow t}^{\ell,i} &= \textrm{OffsetDiversity}(\textrm{Warp}(\hat{\boldsymbol{F}}_{t}^i, \hat{\boldsymbol{v}}_t),\hat{\boldsymbol{v}}_t).
  \end{aligned}
\end{equation}
To capture additional local contexts from $\hat{\boldsymbol{F}}_{t-1}^i$,
the previous local context $\hat{\boldsymbol{C}}_{t-2\rightarrow t-1}^{\ell,i}$ is reused to avoid \textit{additional motion bits}.
$\hat{\boldsymbol{C}}_{t-2\rightarrow t-1}^{\ell, i}$ is firstly warped by $\hat{\boldsymbol{v}}_t$
and then refined to local context $\hat{\boldsymbol{C}}_{t-2\rightarrow t}^{\ell, i}$ by a multi-scale refinement module
which can be formulated as:
\begin{equation}
  \begin{aligned}
    \hat{\boldsymbol{C}}_{t-2\rightarrow t}^{\ell, i} &= \textrm{MultiScaleRefine}(\textrm{Warp}(\hat{\boldsymbol{C}}_{t-2\rightarrow t-1}^{\ell, i},\hat{\boldsymbol{v}}_t)).
  \end{aligned}
\end{equation}\par
To capture non-local correlations, it is necessary to compare \textit{all} elements,
which implies the receptive field should be large enough. 
The offset diversity is not sufficient due to its limited kernel sizes of convolutions.
The necessity of large receptive fields inspires us to employ attention mechanisms.
Thanks to the flexible conditional coding paradigm, the network itself is able to learn the non-local correlations by
cross attention between the current feature and given temporal contexts.
For simplicity, we omit the Depth-wise Res Block in Figure~\ref{fig:MNLC}.
The non-local contexts $\hat{\boldsymbol{C}}_{t-1\rightarrow t}^{n\ell, i}, \hat{\boldsymbol{C}}_{t-2\rightarrow t}^{n\ell, i}$
from $\hat{\boldsymbol{F}}^i_t,\hat{\boldsymbol{F}}^i_{t-1}$ for $\boldsymbol{y}_t^i$
can be captured by, 
\begin{figure}[t]
  \centering
  \subfloat{
  \includegraphics[scale=0.3]{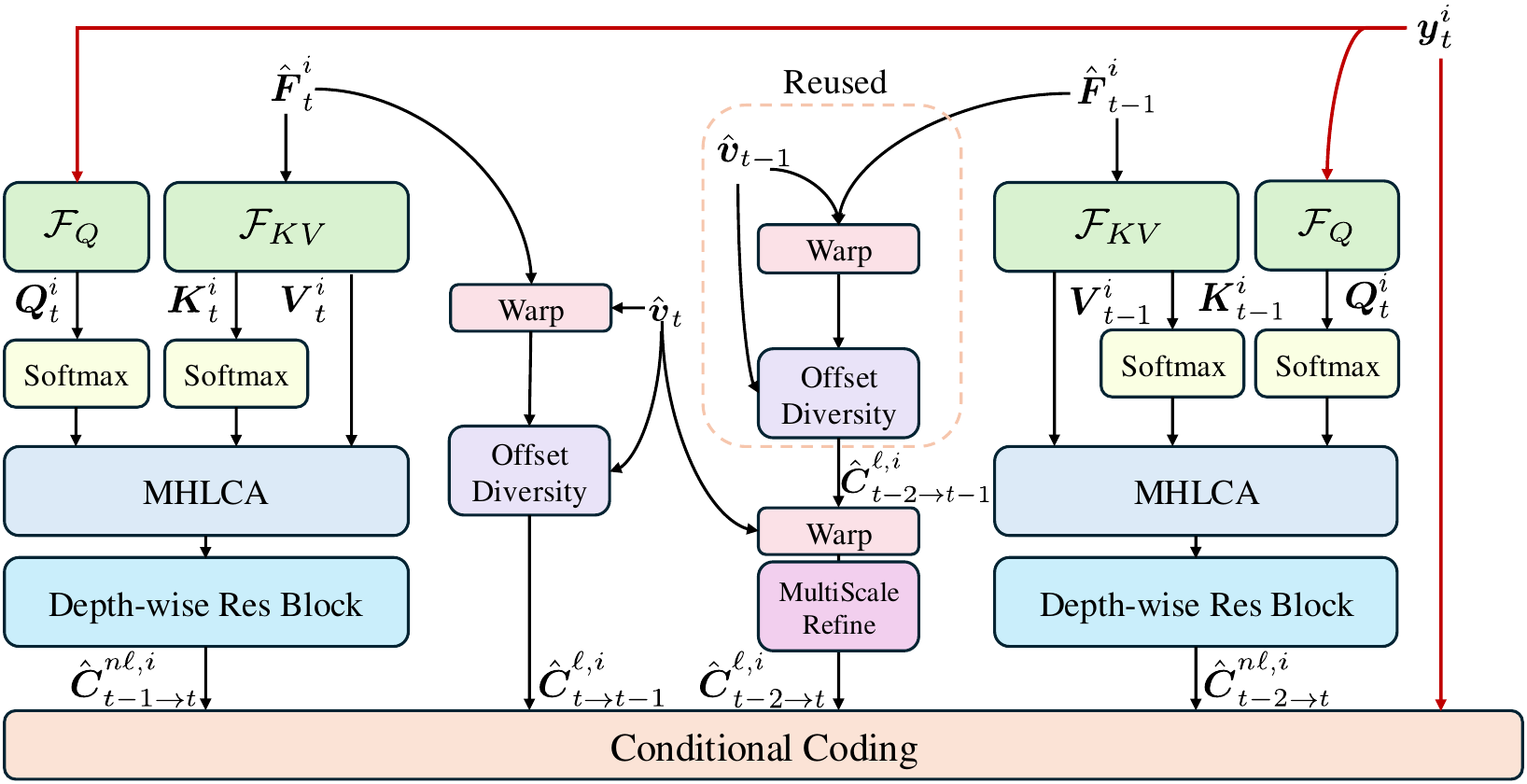}}
  \caption{Proposed {M}ultiple Frame {N}on-{L}ocal {C}ontext Mining (MNLC) for conditional coding of the $\boldsymbol{y}_t^i$.}
  \label{fig:MNLC}
\end{figure}
\begin{equation} 
  \begin{aligned}
    \hat{\boldsymbol{C}}_{t-1\rightarrow t}^{n\ell, i} &= \underbrace{\underbrace{\textrm{Softmax}(\boldsymbol{Q}_{t}^i(\boldsymbol{K}_{t}^i)^{\top})}_{\mathcal{O}(H^2W^2d)}\boldsymbol{V}_{t}^i}_{\mathcal{O}(H^2W^2d + H^2W^2d)},\\
    \hat{\boldsymbol{C}}_{t-2\rightarrow t}^{n\ell, i} &= \textrm{Softmax}(\boldsymbol{Q}_{t}^i(\boldsymbol{K}_{t-1}^i)^{\top})\boldsymbol{V}_{t-1}^i,
  \end{aligned}
  \label{eq:quad_lr}
\end{equation}
where $\boldsymbol{Q}_{t}^i=\mathcal{F}_Q(\boldsymbol{y}_t^i)$, 
$\boldsymbol{K}_{t}^i,\boldsymbol{V}_{t}^i=\mathcal{F}_{KV}(\hat{\boldsymbol{F}}^i_t)$,
$\boldsymbol{K}_{t-1}^i,\boldsymbol{V}_{t-1}^i=\mathcal{F}_{KV}(\hat{\boldsymbol{F}}^i_{t-1})$, $\boldsymbol{Q}_{t}^i,\boldsymbol{K}_{t}^i,\boldsymbol{V}_{t}^i,\boldsymbol{K}_{t-1}^i,\boldsymbol{V}_{t-1}^i \in \mathbb{R}^{\frac{H}{2^i}\times\frac{W}{2^i}\times d}$, 
$d$ is the output channel number of embedding.
$\mathcal{F}_Q, \mathcal{F}_{KV}$ are the embedding layer for queries, keys, and values.
However, the resolutions of input sequences could be $2$K or $4$K, which means the complexity could not be too high.
The complexity of Equation~(\ref{eq:quad_lr}) is $\mathcal{O}(H^2W^2d)$.
To make the exploiting of non-local correlations possible, it is important to reduce the complexity.
The key of Equation~(\ref{eq:quad_lr}) is the \textit{non-negativity} of the 
attention map.
Therefore the generalized attention mechanism could be 
$\textrm{Similarity}(\boldsymbol{Q}_t^i,(\boldsymbol{K}_t^i)^\top)\boldsymbol{V}$.
Inspired by recent advancements in 
linear attention~\cite{katharopoulos2020transformers,shen2021efficient,han2023flatten,jiang2023mlicpp,jiang2022mlic}, 
it is promising to apply a non-negative projection functions $\psi, \phi$ to
$\boldsymbol{Q}_t^i$ and $\boldsymbol{K}_t^i$ to make $\psi(\boldsymbol{Q}_t^i)\phi(\boldsymbol{K}_t^i) \geq 0$.
The other key of Equation~(\ref{eq:quad_lr}) is the normalized value, which makes
it like the probabilities.
Inspired by the two factors, we employ the \textbf{M}ulti-\textbf{H}ead \textbf{L}inear \textbf{C}ross \textbf{A}ttention (MHLCA), 
which applies two independent softmax operations on row and column~\cite{shen2021efficient}
on queries and keys according to Theorem~\ref{theorem:efficient}.
\begin{figure*}[t]
  \centering
  \subfloat{
    \includegraphics[scale=0.1]{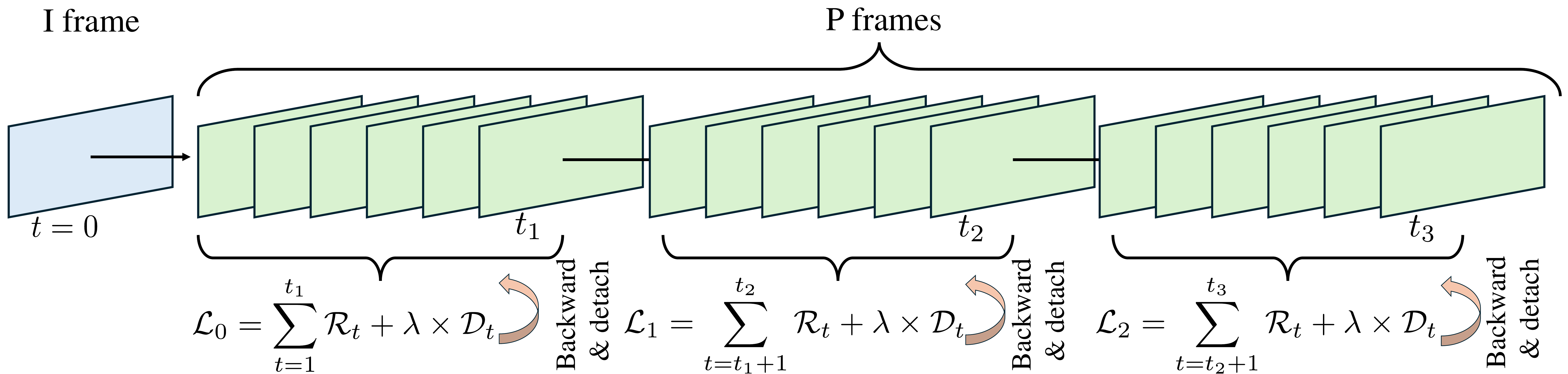}}
  \caption{Proposed Partial Cascaded Finetuning Strategy (PCFS). The I frame model is frozen during finetuning.}
  \label{fig:pcfs}
\end{figure*}
\begin{theorem}
  Like standard vanilla attention, each row of the implicit similarity matrix
  $\textrm{Softmax}_2(\boldsymbol{Q})\textrm{Softmax}_1(\boldsymbol{K})^\top$ 
  sums to 1, representing a normalized attention distribution across all positions.
  \label{theorem:efficient}
\end{theorem}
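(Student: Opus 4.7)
Proof proposal. The claim is essentially a linearity/Fubini argument combined with the two normalizations, so I will prove it by direct computation of row sums. First I set up notation that fixes the ambiguous subscripts on the two softmaxes. Let $\boldsymbol{Q},\boldsymbol{K}\in\mathbb{R}^{N\times d}$, where $N=HW/4^{i}$. Write $\tilde{\boldsymbol{Q}}=\textrm{Softmax}_{2}(\boldsymbol{Q})$, meaning the softmax is taken along dimension $2$ (the feature dimension), so that every row of $\tilde{\boldsymbol{Q}}$ is a probability vector: $\sum_{k=1}^{d}\tilde{Q}_{n,k}=1$ for every row index $n$. Analogously let $\tilde{\boldsymbol{K}}=\textrm{Softmax}_{1}(\boldsymbol{K})$, meaning softmax along dimension $1$ (the token dimension), so that every column of $\tilde{\boldsymbol{K}}$ is a probability vector: $\sum_{n=1}^{N}\tilde{K}_{n,k}=1$ for every feature index $k$. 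Both matrices are componentwise non-negative, which is what gives the product the interpretation of a valid (normalized) attention map.

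The main step is then to evaluate, for an arbitrary row $i\in\{1,\dots,N\}$, the quantity $\sum_{j=1}^{N}\bigl(\tilde{\boldsymbol{Q}}\tilde{\boldsymbol{K}}^{\top}\bigr)_{i,j}$. Expanding the matrix product, swapping the two finite sums (Fubini is trivial here since everything is finite and non-negative), and inserting the two normalization identities yields
\begin{equation*}
\sum_{j=1}^{N}\sum_{k=1}^{d}\tilde{Q}_{i,k}\,\tilde{K}_{j,k}
=\sum_{k=1}^{d}\tilde{Q}_{i,k}\underbrace{\sum_{j=1}^{N}\tilde{K}_{j,k}}_{=\,1}
=\sum_{k=1}^{d}\tilde{Q}_{i,k}=1.
\end{equation*}
The first equality used the column-stochasticity of $\tilde{\boldsymbol{K}}$ (from $\textrm{Softmax}_{1}$), and the second used the row-stochasticity of $\tilde{\boldsymbol{Q}}$ (from $\textrm{Softmax}_{2}$). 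Non-negativity of both factors implies every entry of the product is non-negative, so each row is indeed a probability distribution over the $N$ key positions.

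The main obstacle is really just a bookkeeping one: being explicit about which axis each softmax normalizes, since the subscripts $\textrm{Softmax}_{1}$ and $\textrm{Softmax}_{2}$ are not standard notation, and the statement is false if both softmaxes are taken along the same axis. Once the convention is pinned down so that $\tilde{\boldsymbol{Q}}$ has unit row sums and $\tilde{\boldsymbol{K}}$ has unit column sums, the proof reduces to interchanging the order of summation as above. No additional tools (Perron--Frobenius, probability inequalities, etc.) are required, and the argument makes clear why the construction also supports the linear-complexity re-association $\tilde{\boldsymbol{Q}}(\tilde{\boldsymbol{K}}^{\top}\boldsymbol{V})$ alluded to in Equation~(\ref{eq:quad_lr}): the same Fubini swap shows the two parenthesizations give the same output while reducing cost from $\mathcal{O}(N^{2}d)$ to $\mathcal{O}(Nd^{2})$.
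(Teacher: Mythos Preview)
Your proof is correct and follows essentially the same approach as the paper's own proof: both fix the convention that $\textrm{Softmax}_2$ normalizes rows of $\boldsymbol{Q}$ and $\textrm{Softmax}_1$ normalizes columns of $\boldsymbol{K}$, then compute the row sum of the product by swapping the order of summation and applying the two stochasticity identities in turn. The paper writes the double sum out term by term whereas you compress it into one line, but the argument is identical.
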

\begin{proof}
  Evidently, each row of the similarity matrix in standard attention sums to $1$.
  Let $\textrm{Softmax}_2(\boldsymbol{Q})=\boldsymbol{\mathcal{Q}}\in \mathbb{R}^{L\times C}$, 
  $\textrm{Softmax}_1(\boldsymbol{K})^\top=\boldsymbol{\mathcal{K}}^{\top}\in \mathbb{R}^{C\times L}$, where
  \begin{equation}
    \boldsymbol{\mathcal{Q}}=\begin{bmatrix}
  q_{1,1} & \cdots   & q_{1,C}   \\
  \vdots   & \ddots   & \vdots  \\
  q_{L,1} & \cdots\  & q_{L,C}  \\
       \end{bmatrix},
    \boldsymbol{\mathcal{K}}^\top=\begin{bmatrix}
      k_{1,1} & \cdots   & k_{1,L}   \\
      \vdots  & \ddots   & \vdots  \\
      k_{C,1} & \cdots\  & k_{C,L}  \\
           \end{bmatrix}.
  \end{equation}
  Due to the properties of the softmax operation, we have 
  $\sum_{i=1}^C q_{j, i}=1$, $1\leq j\leq L$; $\sum_{i=1}^L k_{j,i}=1$, 
  $1\leq j\leq C$.
  \begin{equation}
    \begin{aligned}
      \boldsymbol{\mathcal{Q}}\boldsymbol{\mathcal{K}}^\top
      =\begin{bmatrix}
        \sum_{i=1}^C q_{1,i}k_{i,1}  & \cdots   & \sum_{i=1}^C q_{1,i}k_{i,L}   \\
        \vdots & \ddots   & \vdots  \\
        \sum_{i=1}^C q_{L,i}k_{i,1} & \cdots\  & \sum_{i=1}^C q_{L,i}k_{i,L}  \\
             \end{bmatrix}.
    \end{aligned}
  \end{equation}
  Consider the sum of the $\ell$-th row ($1\leq \ell \leq L$):
  \begin{equation}
  \begin{aligned}
      \textrm{Sum}_{\ell} &= \sum_{i=1}^C q_{\ell,i}k_{i,1} + \sum_{i=1}^C q_{\ell,i}k_{i,2} + \cdots + \sum_{i=1}^C q_{\ell,i}k_{i,L}\\
      &= \left(q_{\ell, 1}k_{1,1} + q_{\ell, 2}k_{2,1} + \cdots + q_{\ell, C}k_{C,1}\right) \\
      &+ \cdots \\
      &+ \left(q_{\ell, 1}k_{1,L} + q_{\ell, 2}k_{2, L} + \cdots + q_{\ell, C}k_{C,L}\right)\\
      &= q_{\ell, 1}\underbrace{\sum_{i=1}^Lk_{1,i}}_{1} +  \cdots + q_{\ell, C}\underbrace{\sum_{i=1}^Lk_{C,i}}_{1}
      = \sum_{i=1}^C q_{\ell, i} =1.
  \end{aligned}
  \end{equation}
  Thus, each row of the similarity matrix sums to $1$, completing the proof.
\end{proof}
The softmax-based non-linear projection of queries and keys makes it able to 
compute the product of keys and values first, resulting in linear complexity $\mathcal{O}(HWd^2)$.
The non-local contexts $\hat{\boldsymbol{C}}_{t-1\rightarrow t}^{n\ell, i}, \hat{\boldsymbol{C}}_{t-2\rightarrow t}^{n\ell, i}$
can be learned \textit{linearly} by 
\begin{equation}
  \begin{aligned}
    \hat{\boldsymbol{C}}_{t-1\rightarrow t}^{n\ell, i} &= \underbrace{\textrm{Softmax}_2(\boldsymbol{Q}_{t}^i)\underbrace{\left(\textrm{Softmax}_1(\boldsymbol{K}_{t}^i)^\top\boldsymbol{V}_{t}^i\right)}_{\mathcal{O}(HWd^2)}}_{\mathcal{O}(HWd^2 + HWd^2)},\\
    \hat{\boldsymbol{C}}_{t-2\rightarrow t}^{n\ell, i} &= \textrm{Softmax}_2(\boldsymbol{Q}_{t}^i)\left(\textrm{Softmax}_1(\boldsymbol{K}_{t-1}^i)^\top\boldsymbol{V}_{t-1}^i\right).
  \end{aligned}
  \label{eq:quad_lr2}
\end{equation}
The captured local contexts $\hat{\boldsymbol{C}}_{t-1\rightarrow t}^{\ell,i}, \hat{\boldsymbol{C}}_{t-2\rightarrow t}^{\ell, i}$,
and non-local contexts $\hat{\boldsymbol{C}}_{t-1\rightarrow t}^{n\ell,i}, \hat{\boldsymbol{C}}_{t-2\rightarrow t}^{n\ell,i}$
are employed as priors to conditional coding $\boldsymbol{y}_t^i$.
The complexity of Equation~(\ref{eq:quad_lr2}) is $\frac{d}{HW}$ of that of Equation~(\ref{eq:quad_lr}).
If $H=1920, W=1080, d=48$, \textit{the complexity of Equation~(\ref{eq:quad_lr2}) is only $0.002\%$ of that of Equation~(\ref{eq:quad_lr})}.
The process of decoding is similar to that of encoding, except that the input $\boldsymbol{y}_t^i$ is replaced with $\hat{\boldsymbol{y}}_t^i$.
\subsection{Enhanced Long Coding Chain Adaptation}
In low delay scenarios (\textit{e.g.}, video conferencing, monitoring scene), one sequence may contain hundreds of frames with only one 
initial intra frame under long coding chains.
For enhanced long coding chain adaptation, it is desirable to have the model experience long coding chains during training.
However, most existing LVCs~\cite{lu2019dvc,sheng2022temporal,li2022hybrid,liu2023mmvc} are trained using a maximum of $6-7$ frames,
resulting in a significant train-test mismatch in terms of intra period.
This discrepancy leads to the accumulation of substantial errors.
Addressing the challenge of reducing accumulated errors within limited computational resources remains a critical area of investigation.
Conventionally, LVC models are trained using a cascaded loss function~\cite{lu2020content,sheng2022temporal}:
\begin{equation}
    \mathcal{L}=\sum_{t=1}^T \mathcal{R}_t + \lambda\times \mathcal{D}_t,
\end{equation}
where $T$ is 
the length of frames used for training, $\mathcal{R}_t$ is the frame bit-rate, $\mathcal{D}_t$ is the frame distortion.
\begin{figure}
  \centering
  \subfloat{
    \includegraphics[scale=0.29]{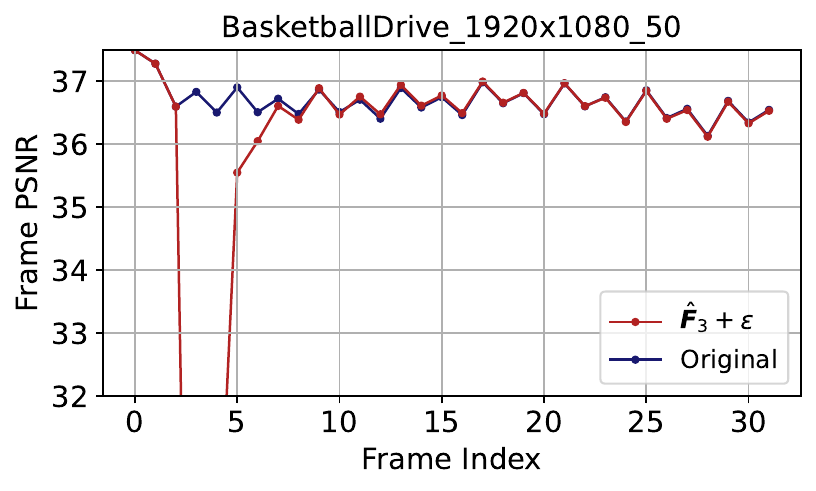}}
  \subfloat{
    \includegraphics[scale=0.29]{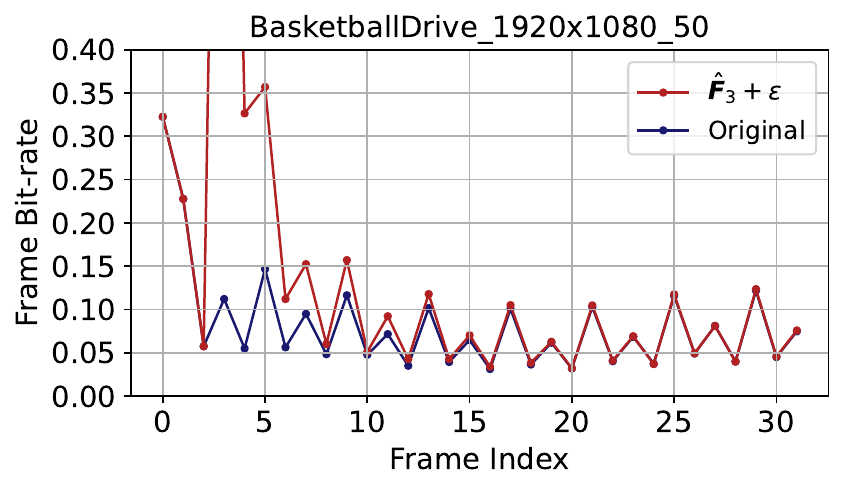}}
  \caption{PSNR / Bpp changes of DCVC-DC~\cite{li2023neural} when the propagated feature $\hat{\boldsymbol{F}}_3$ is with random Gaussian Noise $\boldsymbol{\varepsilon}\sim \mathcal{N}(0,1)$.}
  \label{fig:frame_impact}
\end{figure}
\begin{figure*}[ht]
  \centering
  \subfloat{
    \includegraphics[scale=0.38]{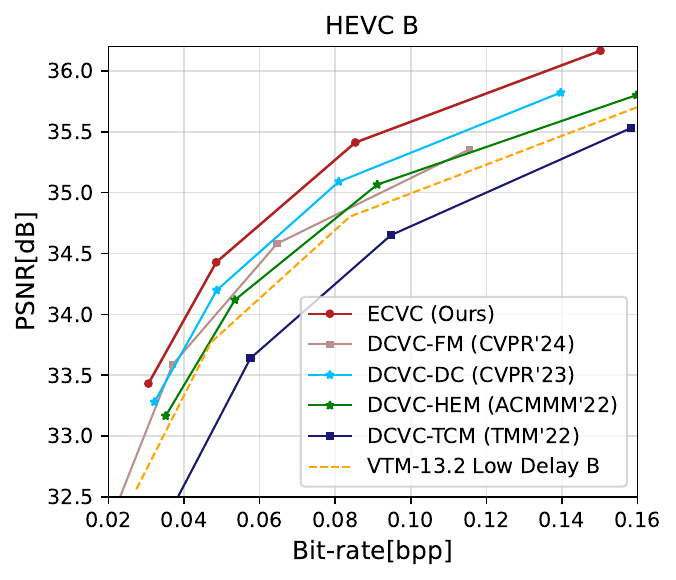}}
  \subfloat{
  \includegraphics[scale=0.38]{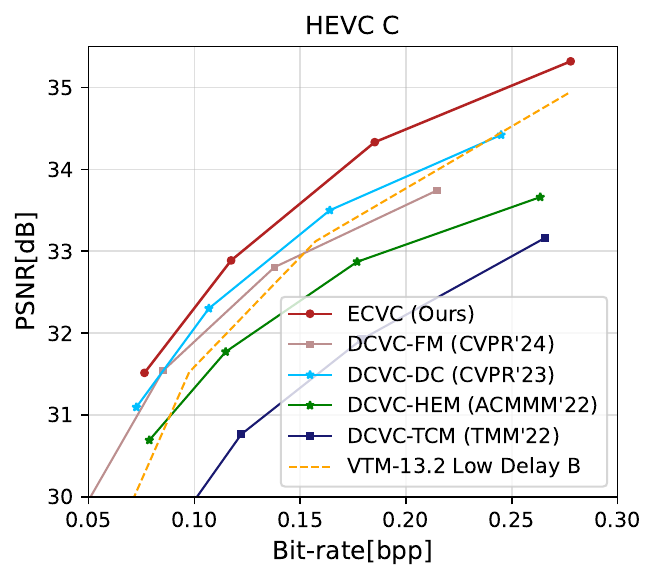}}
  \subfloat{
  \includegraphics[scale=0.38]{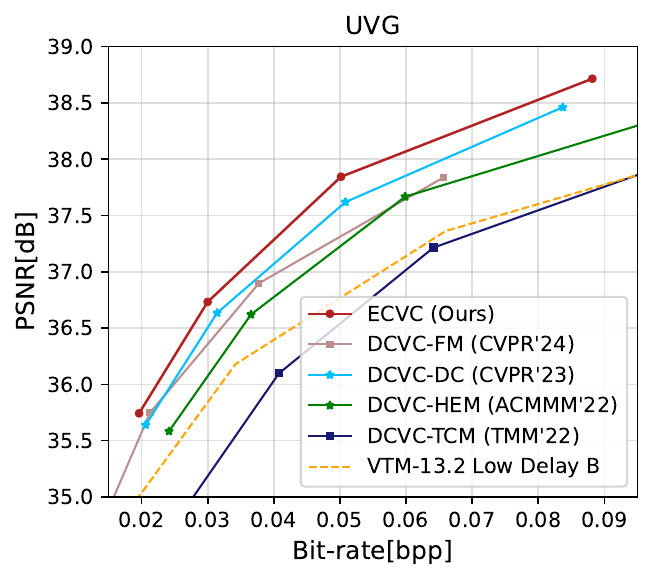}}
  \subfloat{
    \includegraphics[scale=0.38]{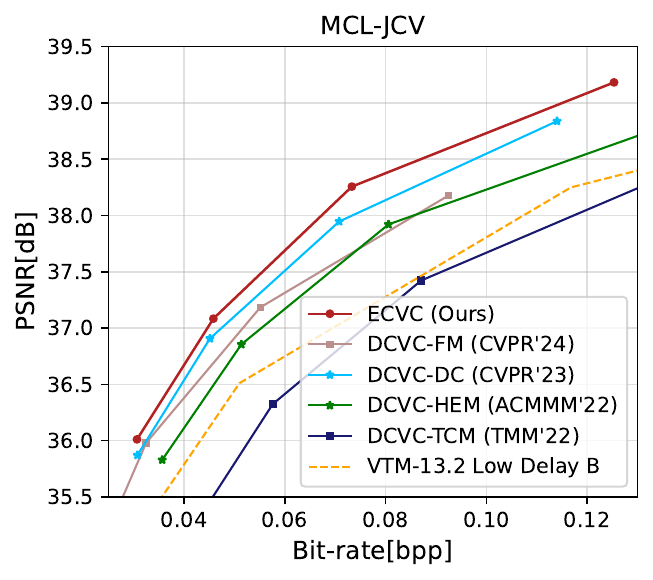}}
  \caption{Rate-Distortion curves on HEVC B, HEVC C, UVG and MCL-JCV dataset. \textbf{The intra period is $\bm{32}$ with $\bm{96}$ frames}.}
  \label{fig:rd_ip32}
\end{figure*}
\begin{table*}  
  \centering
  \setlength{\tabcolsep}{3.2mm}{
  \footnotesize
  \renewcommand\arraystretch{1.1}
    \begin{threeparttable}{
  \begin{tabular}{@{}cccccccccccccc@{}}
  \toprule
  \multicolumn{1}{c}{\multirow{2}{*}{Method}} & \multicolumn{1}{c|}{\multirow{2}{*}{{Venue}}}                            & \multicolumn{7}{c}{{BD-Rate (\%) w.r.t. VTM-13.2 LDB~\cite{bross2021overview}}}  \\
  \multicolumn{1}{c}{}      & \multicolumn{1}{c|}{}                                                & \multicolumn{1}{c}{HEVC B}    & \multicolumn{1}{c}{HEVC C}  & \multicolumn{1}{c}{HEVC D} & \multicolumn{1}{c}{HEVC E}  & \multicolumn{1}{c}{UVG~\cite{mercat2020uvg}} & \multicolumn{1}{c}{MCL-JCV~\cite{wang2016mcl}} & \multicolumn{1}{c}{Average} \\       \midrule                                      
  \multicolumn{1}{c}{DCVC-TCM~\cite{sheng2022temporal}}     & \multicolumn{1}{c|}{TMM'22}                                  & $+28.5$       & $+60.5$ & $+27.8$       & $+67.3$  & $+17.1$ & $+30.6$   & $+38.6$ \\
  \multicolumn{1}{c}{DCVC-HEM~\cite{li2022hybrid}}    & \multicolumn{1}{c|}{ACMMM'22}                                       & $-5.1$       & $+15.0$ & $-8.9$       & $+7.1$   & $-18.2$& $-6.4$ & $-2.8$ \\
  \multicolumn{1}{c}{DCVC-DC~\cite{li2023neural}}      & \multicolumn{1}{c|}{CVPR'23}                                     & ${-17.4}$       & $-9.8$ & ${-29.0}$       & $-26.0$  & ${-30.0}$  & ${-20.0}$ & ${-22.0}$ \\
  \multicolumn{1}{c}{DCVC-FM~\cite{li2024neural}}        & \multicolumn{1}{c|}{CVPR'24}                                   & $-12.5$       & ${-10.3}$ &  $-26.5$    & ${-26.9}$  & $-24.0$  & $-12.7$  & $-18.8$\\\midrule
  \multicolumn{1}{c}{ECVC }      & \multicolumn{1}{c|}{Ours}   & $\bm{-28.3}$     & $\bm{-19.6}$ & $\bm{-36.7}$      & $\bm{-27.1}$ & $\bm{-37.6}$ & $\bm{-26.3}$ & $\bm{-29.3}$   \\
  \bottomrule  
\end{tabular}}
\begin{tablenotes}    
  \footnotesize               
  \item[1] The quality indexes of DCVC-FM are set to match the bit-rate range of DCVC-DC. 
\end{tablenotes}       
\end{threeparttable}}
\caption{{BD-Rate $(\%)$~\cite{bjontegaard2001calculation} comparison for PSNR (dB). The anchor is \textbf{VTM-13.2 LDB}. \textbf{The Intra Period is $\bm{32}$ with $\bm{96}$ frames}.}}
\label{tab:rd}   
\end{table*}
\begin{table*}
  \centering
  \setlength{\tabcolsep}{3.2mm}{
  \footnotesize
  \renewcommand\arraystretch{1}
  \begin{threeparttable}{
  \begin{tabular}{@{}cccccccccccccc@{}}
  \toprule
  \multicolumn{1}{c}{\multirow{2}{*}{Method}} & \multicolumn{1}{c|}{\multirow{2}{*}{{Venue}}}                            & \multicolumn{7}{c}{{BD-Rate (\%) w.r.t. VTM-13.2 LDB~\cite{bross2021overview}}}  \\
  \multicolumn{1}{c}{}      & \multicolumn{1}{c|}{}                                                & \multicolumn{1}{c}{HEVC B}    & \multicolumn{1}{c}{HEVC C}  & \multicolumn{1}{c}{HEVC D} & \multicolumn{1}{c}{HEVC E} & \multicolumn{1}{c}{UVG~\cite{mercat2020uvg}} & \multicolumn{1}{c}{MCL-JCV~\cite{wang2016mcl}} & \multicolumn{1}{c}{Average} \\       \midrule             
  \multicolumn{1}{c}{DCVC-TCM~\cite{sheng2022temporal}}     & \multicolumn{1}{c|}{TMM'22}                                  & $-20.5$       & $-21.7$ & $-36.2$       & $-20.5$   & $-6.0$ & $-18.6$ & $-20.6$    \\
  \multicolumn{1}{c}{DCVC-HEM~\cite{li2022hybrid}}    & \multicolumn{1}{c|}{ACMMM'22}                                       & $-47.4$       & $-43.3$ & $-55.5$       & $-52.4$   & $-32.7$& $-44.0$ & $-45.9$ \\
  \multicolumn{1}{c}{DCVC-DC~\cite{li2023neural}}      & \multicolumn{1}{c|}{CVPR'23}                                     & ${-53.0}$       & ${-54.6}$ & ${-63.4}$       & ${-60.7}$  & ${-36.7}$  & ${-49.1}$ & ${-52.9}$ \\\midrule
  \multicolumn{1}{c}{ECVC }      & \multicolumn{1}{c|}{Ours}    & $\bm{-57.7}$     & $\bm{-58.2}$ & $\bm{-65.6}$      & $\bm{-60.5}$ & $\bm{-42.7}$ & $\bm{-54.9}$ & $\bm{-56.6}$  \\\bottomrule  
\end{tabular}}
\begin{tablenotes}    
  \footnotesize               
  \item[1] The MS-SSIM~\cite{wang2003multiscale} optimized weights of DCVC-FM are not open-sourced.   
\end{tablenotes}       
\end{threeparttable}}
\caption{{BD-Rate $(\%)$~\cite{bjontegaard2001calculation} comparison for MS-SSIM~\cite{wang2003multiscale}. The anchor is \textbf{VTM-13.2 LDB}. \textbf{The intra period is $\bm{32}$ with $\bm{96}$ frames}.}}
\label{tab:rd_ip32_ssim} 
\end{table*}
Increasing the frame count during training, however, results in escalated GPU memory consumption, posing a substantial computational burden.\par
To mitigate above issue, we propose a partial cascaded finetuning strategy (PCFS) following initial training on 6 frames.
\textit{In low-delay scenarios, neighboring frames have a greater impact on the current frame's coding than distant frames}. 
As illustrated in Figure~\ref{fig:frame_impact}, 
if the propagated feature $\hat{\boldsymbol{F}}_3$ is with Gaussian noise, the coding performances
of Frame $3\sim 13$ are significantly influenced, especially of the bit-rate while the 
the coding performances of frame $13\sim 32$ are less affected.
Thus, our approach involves partitioning the finetuning sequences into several groups. When a group is fed into our ECVC, 
the associated loss and the gradients are computed to update the model.
The PCFS process is illustrated in Figure~\ref{fig:pcfs}. The process is formulated as:
\begin{equation}
  \boldsymbol{\theta}_{j+1} = \boldsymbol{\theta}_j - \alpha \nabla \sum_{t=t_{j} +1}^{t_{(j+1)}} \mathcal{R}_t + \lambda \mathcal{D}_t,
\end{equation}
where $\boldsymbol{\theta}$ is the model weight, $j\in \mathbb{N}$ is the group index, $\alpha$ is the learning rate and $t_0 = 0$.
Empirically, this straightforward method yields effective results, 
despite the gradient of the first frame in each group not propagating to its previous frames. 
A potential extension could involve a shifted-window mechanism to enhance gradient propagation, 
although our results indicate no performance improvement over the original PCFS.
The PCFS alleviates error accumulation due to two aspects:
(1) fine-tuning on longer sequences reduces the mismatch between training and testing;
(2) the calculation of the loss within each group is cascaded, 
which makes the error propagation within the group aware during
fine-tuning owing to the effect of the gradients.
  
      
      
      
      
\section{Experiments}
\subsection{Experimental Setup}
The proposed ECVC is implemented with Pytorch 2.2.2~\cite{paszke2019pytorch} and trained with Vimeo-90K train split~\cite{xue2019video} and BVI-DVC dataset~\cite{ma2021bvi} with $4$ Tesla A100-80G GPUs.
Following the DCVC series~\cite{sheng2022temporal,li2023neural,li2022hybrid,li2024neural}, we apply the multi-stage training~\cite{sheng2022temporal} on Vimeo-90K.
The ECVC is further finetuned on BVI-DVC with the proposed partial cascaded training strategy. 
The sequences are randomly cropped to $256\times 256$ patches and
the batch size is $4$ during training and finetuning.
In the finetuning stage, $55$ frames are involved and divided into $3$ groups and the learning rate is $10^{-6}$.
The loss function is $\mathcal{L} = \mathcal{R} + \lambda \times \mathcal{D}$, where $\mathcal{R}$ is the bit-rate
and $\mathcal{D}$ is the distortion.
The $\lambda$ is set to $\{85,170,380,840\}$ for different bit-rates
when ECVC is optimized for MSE and the $\lambda$ is set to $\{7.68,15.36,30.72,61.44\}$
when ECVC is optimized for MS-SSIM~\cite{wang2003multiscale}.
We employ the intra frame codec of DCVC-DC~\cite{li2024neural} for intra frame coding. \par
Following existing literature~\cite{li2021deep,li2022hybrid,sheng2022temporal,li2024neural,li2023neural,sheng2024spatial,lu2024deep}, ECVC is evaluated on HEVC datasets~\cite{bossen2013common}, including 
class B, C, D, E, UVG~\cite{mercat2020uvg} and MCL-JCV~\cite{wang2016mcl}.
To fully demonstrate the superiority of the ECVC, we compare the ECVC with
DCVC-TCM~\cite{sheng2022temporal}, DCVC-HEM~\cite{li2022hybrid}, DCVC-DC~\cite{li2023neural}, and DCVC-FM~\cite{li2024neural}. The testing scenarios
are low delay with IP $32$ and IP $-1$~\cite{bossen2019jvet}. The distortion metric is PSNR and MS-SSIM~\cite{wang2003multiscale} in RGB color format.
\subsection{Comparisons with Previous SOTA Methods}
\textbf{Under IP $\bm{32}$}
The results under IP $32$ are presented in Table~\ref{tab:rd}, Table~\ref{tab:rd_ip32_ssim} and Figure~\ref{fig:rd_ip32}.
Our ECVC outperforms DCVC-DC and DCVC-FM on all datasets. Specifically, 
the ECVC achieves an average of $29.3\%$ bit-rate saving over the VTM-13.2. The bit-rate savings of DCVC-DC
and DCVC-FM are $22\%$ and $18.8\%$, respectively. The performance improvements over DCVC-DC demonstrate the
superiority of the proposed techniques.
In addition, regarding the MS-SSIM optimized models, ECVC achieves an average of $56.6\%$ bit-rate saving over the VTM-13.2. 
The proposed ECVC outperforms the DCVC-DC by reducing $3.7\%$ more bit-rate.
\begin{figure*}[ht]
  \centering
  \subfloat{
    \includegraphics[scale=0.4]{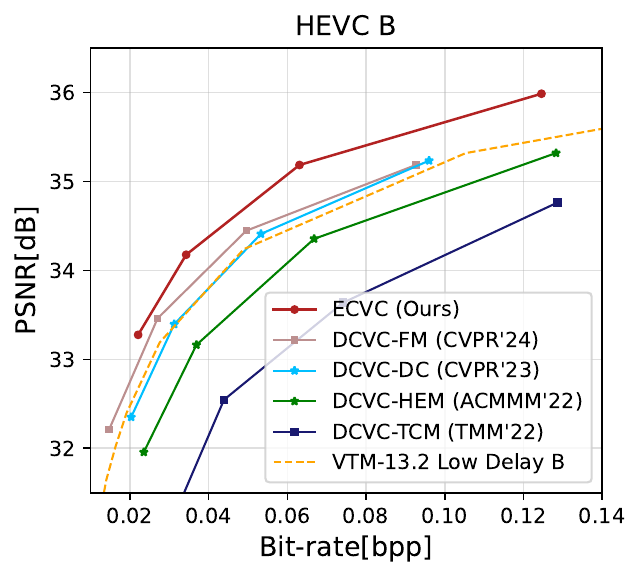}}
  \subfloat{
    \includegraphics[scale=0.4]{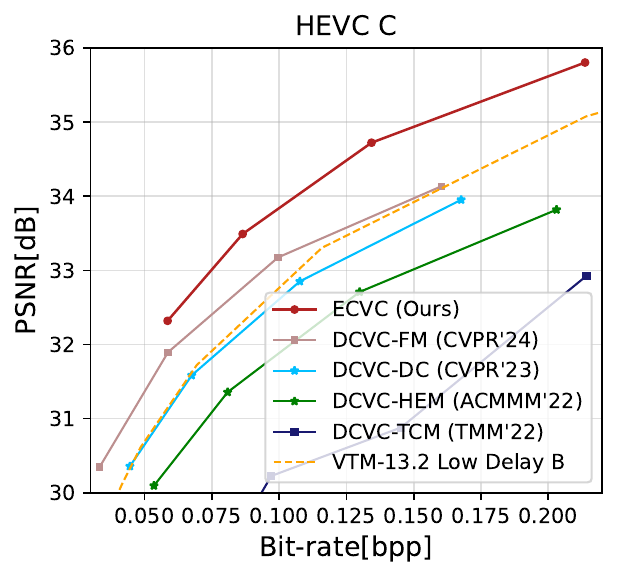}}
  \subfloat{
  \includegraphics[scale=0.4]{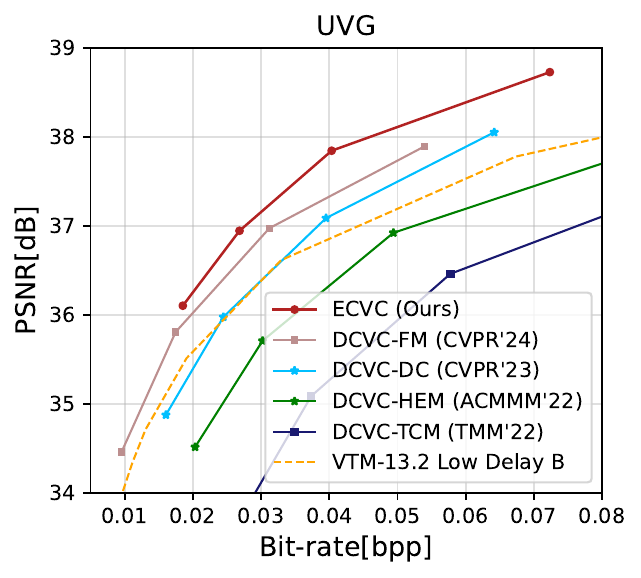}}
  \subfloat{
    \includegraphics[scale=0.4]{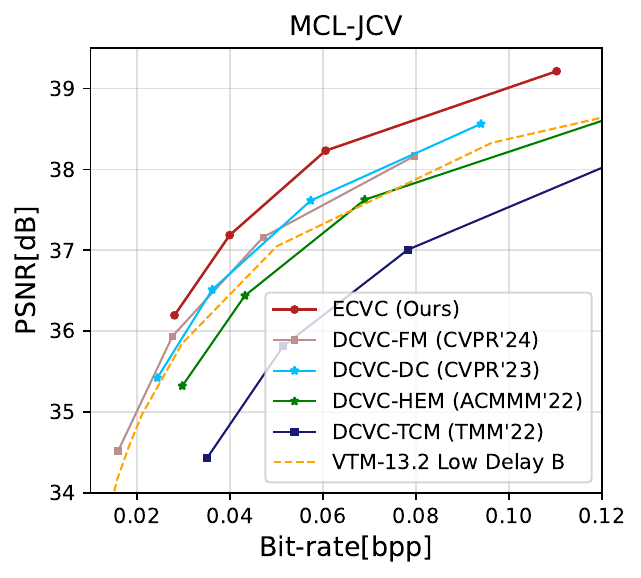}}
  \caption{Rate-Distortion curves on HEVC B, HEVC C, UVG and MCL-JCV dataset. \textbf{The intra period is $\bm{-1}$ with All Frames}.}
  \label{fig:rd_ip-1}
\end{figure*}
\begin{table*}[ht]
  \centering
  \footnotesize
  \setlength{\tabcolsep}{3.2mm}{
  \renewcommand\arraystretch{1}
  \begin{tabular}{@{}cccccccccccccc@{}}
  \toprule
  \multicolumn{1}{c}{\multirow{2}{*}{Method}} & \multicolumn{1}{c|}{\multirow{2}{*}{{Venue}}}                            & \multicolumn{7}{c}{{BD-Rate (\%) w.r.t. VTM-13.2 LDB~\cite{bross2021overview}}}  \\
  \multicolumn{1}{c}{}      & \multicolumn{1}{c|}{}                                                & \multicolumn{1}{c}{HEVC B}    & \multicolumn{1}{c}{HEVC C}  & \multicolumn{1}{c}{HEVC D} & \multicolumn{1}{c}{HEVC E} & \multicolumn{1}{c}{UVG~\cite{mercat2020uvg}} & \multicolumn{1}{c}{MCL-JCV~\cite{wang2016mcl}} & \multicolumn{1}{c}{Average} \\       \midrule                                      
  \multicolumn{1}{c}{DCVC-TCM~\cite{sheng2022temporal}}    & \multicolumn{1}{c|}{TMM'22}                                       & $+55.4$       & $+97.4$ & $+50.0$       & $+214.2$   & $+60.4$& $+50.7$ & $+88.0$ \\
  \multicolumn{1}{c}{DCVC-HEM~\cite{li2022hybrid}}    & \multicolumn{1}{c|}{ACMMM'22}                                       & $+3.9$       & $+28.4$ & $-1.2$       & $+66.3$   & $+0.5$& $+1.7$ & $+16.6$ \\
  \multicolumn{1}{c}{DCVC-DC~\cite{li2023neural}}        & \multicolumn{1}{c|}{CVPR'23}                                   & $-11.0$       & $+0.2$ &  $-23.9$    & $-7.8$  & $-21.0$  & $-13.0$  & $-12.8$\\
  \multicolumn{1}{c}{DCVC-FM~\cite{li2024neural}}        & \multicolumn{1}{c|}{CVPR'24}                                   & $-11.7$       & $-7.9$ &  $-28.2$    & $-25.8$  & $-23.9$  & $-12.3$  & $-18.3$\\\midrule
  \multicolumn{1}{c}{ECVC }      & \multicolumn{1}{c|}{Ours}   & $\bm{-27.9}$    & $\bm{-18.9}$ &$\bm{-39.0}$  &$\bm{-26.4}$      & $\bm{-38.3}$ & $\bm{-27.7}$ & $\bm{-29.7}$   \\
  \bottomrule  
\end{tabular}}
\caption{{BD-Rate $(\%)$~\cite{bjontegaard2001calculation} comparison for PSNR (dB). The anchor is \textbf{VTM-13.2 LDB}. \textbf{The Intra Period is $\bm{-1}$ with $\bm{96}$ frames.}}}
\label{tab:rd_ip-1} 
\end{table*}
\begin{table*}
  \centering
  \footnotesize
  \setlength{\tabcolsep}{3.7mm}{
  \renewcommand\arraystretch{1}
  \begin{tabular}{@{}cccccccccccccc@{}}
  \toprule
  \multicolumn{1}{c}{\multirow{2}{*}{Method}} & \multicolumn{1}{c|}{\multirow{2}{*}{{Venue}}}                            & \multicolumn{7}{c}{{BD-Rate (\%) w.r.t. VTM-13.2 LDB~\cite{bross2021overview}}}  \\
  \multicolumn{1}{c}{}      & \multicolumn{1}{c|}{}                                                & \multicolumn{1}{c}{HEVC B}    & \multicolumn{1}{c}{HEVC C}  & \multicolumn{1}{c}{HEVC D} & \multicolumn{1}{c}{HEVC E} & \multicolumn{1}{c}{UVG} & \multicolumn{1}{c}{MCL-JCV} & \multicolumn{1}{c}{Average} \\       \midrule                                      
  \multicolumn{1}{c}{DCVC-TCM~\cite{sheng2022temporal}}    & \multicolumn{1}{c|}{TMM'22}                                       & $+107.3$       & $+143.5$ & $+99.2$       & $+835.9$   & $+120.6$& $+63.7$ & $+228.4$ \\
  \multicolumn{1}{c}{DCVC-HEM~\cite{li2022hybrid}}    & \multicolumn{1}{c|}{ACMMM'22}                                       & $+22.8$       & $+32.3$ & $+13.4$       & $+236.9$   & $+33.5$& $+6.7$ & $+57.6$ \\
  \multicolumn{1}{c}{DCVC-DC~\cite{li2023neural}}        & \multicolumn{1}{c|}{CVPR'23}                                   & $-7.5$       & $+3.4$ &  $-12.0$    & $+83.9$  & $-4.5$  & $-12.9$  & $+8.4$\\
  \multicolumn{1}{c}{DCVC-FM~\cite{li2024neural}}        & \multicolumn{1}{c|}{CVPR'24}                                   & $-19.9$       & $-17.4$ &  $-25.7$    & $-24.5$  & $-22.5$  & $-13.4$  & $-20.6$\\\midrule
  \multicolumn{1}{c}{ECVC }      & \multicolumn{1}{c|}{Ours}   & $\bm{-33.4}$    & $\bm{-29.5}$ &$\bm{-38.8}$  &$\bm{-23.5}$      & $\bm{-37.5}$ & $\bm{-29.7}$ & $\bm{-32.1}$   \\
  \bottomrule  
\end{tabular}}
\caption{{BD-Rate $(\%)$~\cite{bjontegaard2001calculation} comparison for PSNR (dB). The anchor is \textbf{VTM-13.2 LDB}. \textbf{The Intra Period is $\bm{-1}$ with All frames.}}}
\label{tab:rd_ip-1_all} 
\end{table*}
\begin{table}  
  \centering
  \footnotesize
  \setlength{\tabcolsep}{2mm}{
  \renewcommand\arraystretch{1}
  \begin{tabular}{@{}cccccccccccccc@{}}
  \toprule
  \multicolumn{1}{c}{\multirow{1}{*}{Method}}                            & \multicolumn{1}{c}{Params (M)} & \multicolumn{1}{c}{kMACs/pixel} & \multicolumn{1}{c}{ET(s)} & \multicolumn{1}{c}{DT(s)}\\ \midrule                                      
  \multicolumn{1}{c}{DCVC-HEM~\cite{li2022hybrid}}                                       & $50.9$       & $1581$ & $0.67$       & $0.52$ \\
  \multicolumn{1}{c}{DCVC-DC~\cite{li2023neural}}                                       & $50.8$       & $1274$ & $0.74$       & $0.59$ \\
  \multicolumn{1}{c}{DCVC-FM~\cite{li2024neural}}                                         & $44.9$       & $1073$ &  $0.73$    & $0.60$  \\\midrule
  \multicolumn{1}{c}{ECVC }      & 61.9    & $1407$ &  $0.78$    & $0.62$ \\\bottomrule  
\end{tabular}}
\caption{{Complexity comparison among proposed ECVC, DCVC-HEM, DCVC-DC, and DCVC-FM.
``Params" denotes the number of model parameters. ``KMACs/pixel" denotes the multiply-add operations per pixel on 1080p sequences. ``ET", ``DT" are average encoding and decoding time per frame.}}
\label{tab:complex} 
\end{table}
\par
\textbf{Under IP $\bm{-1}$}
The results under IP -1 are presented in Table~\ref{tab:rd_ip-1}, Table~\ref{tab:rd_ip-1_all} and Figure~\ref{fig:rd_ip-1}.
Following DCVC-FM~\cite{li2024neural}, the performance under IP -1 is evaluated on $96$ frames and \textit{all} frames.
When evaluated on $96$ frames, our ECVC reduces $11.1\%$ more bit-rate over VTM-13.2 compared to DCVC-FM.
When evaluated on \textit{all} frames, our ECVC reduces $11.5\%$ more bit-rate over VTM-13.2 compared to DCVC-FM.
The significant performance improvement of our ECVC over DCVC-DC and 
DCVC-FM demonstrates the effectiveness of the proposed techniques.
\par
\textbf{Subjective quality comparison}
The subjective quality comparison is presented in Figure~\ref{fig:vis}. 
Compared with DCVC-FM, the ECVC consumes lower bit-rates and the reconstruction frame achieves $1$ dB improvements in PSNR.
In terms of subjective quality, our ECVC has significant improvements compared to previous SOTA methods, such as DCVC-DC and DCVC-FM.
\subsection{Complexity Analysis}
We compare the complexity of ECVC with recent DCVC-HEM and DCVC-FM and the baseline DCVC-DC.
The results are presented in Table~\ref{tab:complex}.
Since the ECVC is based on DCVC-DC with the involvement of MNLC, the complexity of ECVC is slightly higher than that of DCVC-DC. 
Considering the rate-distortion performance advancement of ECVC over DCVC-DC, the introduction of MNLC is worthwhile.
\begin{table}[t]   
  \centering
  \footnotesize
  \setlength{\tabcolsep}{0.35mm}{
  \renewcommand\arraystretch{1}
  \begin{tabular}{@{}cccccccccccccc@{}}
    \toprule
    \multicolumn{1}{c|}{{Methods}} &\multicolumn{1}{c|}{{IP}}                           & \multicolumn{1}{c}{B}  & \multicolumn{1}{c}{C}& \multicolumn{1}{c}{D} & \multicolumn{1}{c}{E} & \multicolumn{1}{c}{Avg}\\ \midrule                                      
    \multicolumn{1}{c|}{{Base (DCVC-DC) Large}  }      & \multicolumn{1}{c|}{{$32$}}         & {$-2.1$}  & {$-4.2$}  & {$-5.4$} & {$-0.2$} &{$-3.0$}\\
    \multicolumn{1}{c|}{{Base + MNLC}}   & \multicolumn{1}{c|}{{$32$}}      & {\bm{$-4.0$}} & {\bm{$-9.9$}} & {\bm{$-9.0$}} & {\bm{$-9.7$}}  &{\bm{$-8.2$}}   \\\midrule   
    \multicolumn{1}{c|}{{Base (DCVC-DC) Large }}      & \multicolumn{1}{c|}{{$-1$}}         & {$-1.4$}  & {$-2.4$}  & {$-4.3$} & {$-4.7$} &{$-3.2$}\\
    \multicolumn{1}{c|}{{Base + MNLC}}   & \multicolumn{1}{c|}{{$-1$}}      & {\bm{$-2.5$}} & {\bm{$-6.5$}} & {\bm{$-7.9$}} & {\bm{$-16.9$}}  &{\bm{$-8.5$}}   \\\midrule 
    \multicolumn{1}{c|}{DCVC-HEM}      & \multicolumn{1}{c|}{$32$}         & $23.1$  &$23.3$   &$26.0$ &$22.9$&$23.8$  \\
    \multicolumn{1}{c|}{DCVC-HEM + PCFS}      & \multicolumn{1}{c|}{$32$}         & $14.1$  & $10.6$  & $14.5$& $18.7$ &$14.5$\\
    \multicolumn{1}{c|}{Base (DCVC-DC) + PCFS}      & \multicolumn{1}{c|}{$32$}         & $-1.9$  & $-8.5$  & $-8.6$ & $-8.0$ &$-6.8$\\
    \multicolumn{1}{c|}{Base + PCFS + NLC}      & \multicolumn{1}{c|}{$32$}         &  $-4.6$  & $-13.0$  &$-12.1$ & $-11.4$ &$-10.3$\\
    \multicolumn{1}{c|}{Base + PCFS + MNLC}   & \multicolumn{1}{c|}{$32$}      & \bm{$-9.1$} & \bm{$-15.3$} & \bm{$-14.5$} & \bm{$-14.8$}  &\bm{$-13.4$}   \\\midrule   
    \multicolumn{1}{c|}{DCVC-HEM}      & \multicolumn{1}{c|}{$-1$}         & $29.8$   & $30.8$  &$32.9$ & $18.1$ & $27.9$\\
    \multicolumn{1}{c|}{DCVC-HEM + PCFS}      & \multicolumn{1}{c|}{$-1$}         &$9.8$   & $5.9$  &$9.7$ & $-7.4$ & $4.5$  \\
    \multicolumn{1}{c|}{Base (DCVC-DC) + PCFS}      & \multicolumn{1}{c|}{$-1$}         & $-6.0$  & $-11.8$  & $-13.2$&$-19.4$ &$-12.6$    \\
    \multicolumn{1}{c|}{Base + PCFS + NLC }      & \multicolumn{1}{c|}{$-1$}        & $-10.4$  & $-18.5$  &$-19.5$ &$-27.4$ &$-19.0$ \\
    \multicolumn{1}{c|}{Base + PCFS + MNLC }   & \multicolumn{1}{c|}{$-1$}      & \bm{$-18.1$} & \bm{$-25.7$} & \bm{$-24.8$}  & \bm{$-45.1$} &\bm{$-28.4$}\\\bottomrule  
  \end{tabular}}
\caption{{Ablation Studies on HEVC B, C, D, E. The anchor is the base model. ``IP" denotes the intra period.}}
\label{tab:ablation}   
\end{table}
\begin{table}
  \centering
  \footnotesize
  \setlength{\tabcolsep}{2mm}{
  \renewcommand\arraystretch{1}
  \begin{tabular}{@{}cccccccccccccc@{}}
  \toprule
  \multicolumn{1}{c|}{\multirow{1}{*}{Frames}} &\multicolumn{1}{c|}{\multirow{1}{*}{IP}}                           & \multicolumn{1}{c}{B}  & \multicolumn{1}{c}{C}& \multicolumn{1}{c}{D} & \multicolumn{1}{c}{E} \\ \midrule                                      
  \multicolumn{1}{c|}{$6$}      & \multicolumn{1}{c|}{$32$}     & $-4.0$ & $-9.9$  & $-9.0$ & $-9.7$      \\
  \multicolumn{1}{c|}{$20$}        & \multicolumn{1}{c|}{$32$}     & $-8.6$ & $-14.5$  & $-12.4$ & $-12.1$      \\
  \multicolumn{1}{c|}{$38$}   & \multicolumn{1}{c|}{$32$}    & $-8.9$ & $-15.0$  & $-14.4$ & $-13.8$      \\
  \multicolumn{1}{c|}{$55$}      & \multicolumn{1}{c|}{$32$}     & $-9.1$ & $-15.3$ & $-14.5$ & $-14.8$    \\\midrule 
  \multicolumn{1}{c|}{$6$}      & \multicolumn{1}{c|}{$-1$}      & $-0.4$  & $-2.9$ & $-6.7$  & $-9.5$      \\
  \multicolumn{1}{c|}{$20$}        & \multicolumn{1}{c|}{$-1$}             & $-17.1$  & $-24.4$ & $-22.3$  & $-41.3$      \\
  \multicolumn{1}{c|}{$38$}   & \multicolumn{1}{c|}{$-1$}         & $-18.1$  & $-25.7$ & $-24.2$  & $-42.5$      \\
  \multicolumn{1}{c|}{$55$}      & \multicolumn{1}{c|}{$-1$}      & $-18.8$ & $-26.1$ & $-24.8$  & $-45.1$      \\
  \bottomrule  
\end{tabular}}
\caption{{Influences of finetuning frames. The anchor is the base model. ``IP" denotes the intra period.}}
\label{tab:frame_ablation}   
\end{table}
\begin{figure}[t]
  \centering
  \subfloat{
    \includegraphics[scale=0.135]{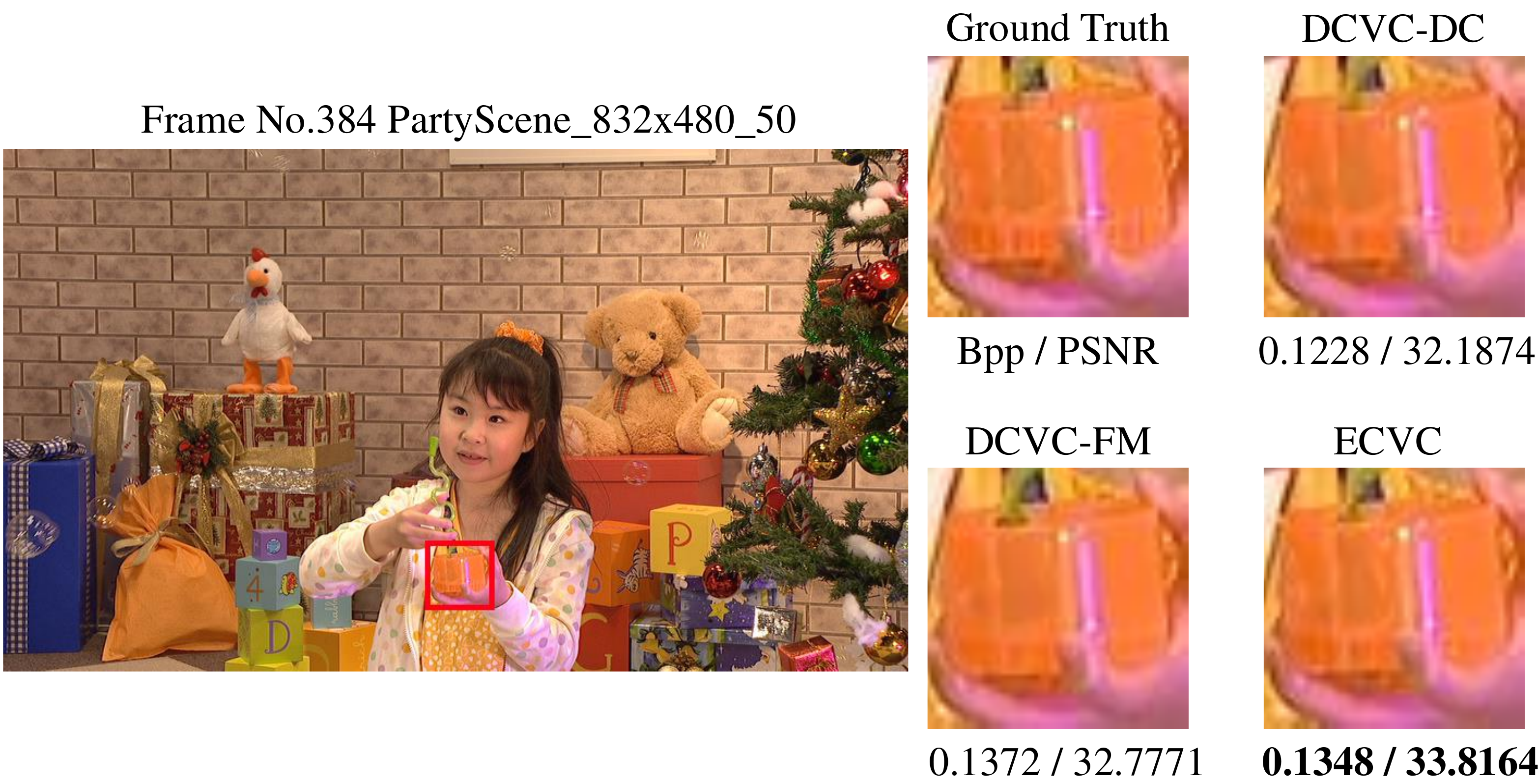}}
  \caption{Subjective quality comparison on reconstruction frames of DCVC-DC~\cite{li2023neural}, DCVC-FM~\cite{li2024neural}, the proposed ECVC, and the ground truth.}
  \label{fig:vis}
\end{figure}
\subsection{Ablation Studies}
In ablation studies, all models are optimized with MSE, and PSNR is adopted to evaluate distortion.
Table~\ref{tab:ablation} presents the improvements of
each component under IP $32$ and IP $-1$ with $96$ frames setting.
``Base" is our reproduced baseline DCVC-DC$^\ast$.
``Base Large" has 60M params by proportionally increasing all channel dimensions.
``LNC" denotes only one reference frame for local and non-local context mining (meaning Base + one reference frame non-local context).\par
\textbf{Under IP $\bm{32}$}
LNC is able to capture non-local correlations, 
which effectively improves the rate-distortion performance.
Compared with LNC, the MNLC adopts one more reference frame,
enhancing the performance.
Thanks to the proposed PCFS, the mismatch of frames between 
training and inference can be alleviated.
The error accumulation is reduced and the 
performance is further improved.\par
\textbf{Under IP $\bm{-1}$}
Under a long prediction chain, the influence of PCFS is more significant,
due to the impacts of error accumulation.
Specifically, when equipped with PCFS, the model reduces $45.1\%$ bit-rate 
over the baseline on HEVC E. Our proposed PCFS is able to finetune the model
on \textit{much longer} sequences, thereby reducing accumulated errors.\par
\begin{figure}[t]
  \centering
  \subfloat{
    \includegraphics[scale=0.34]{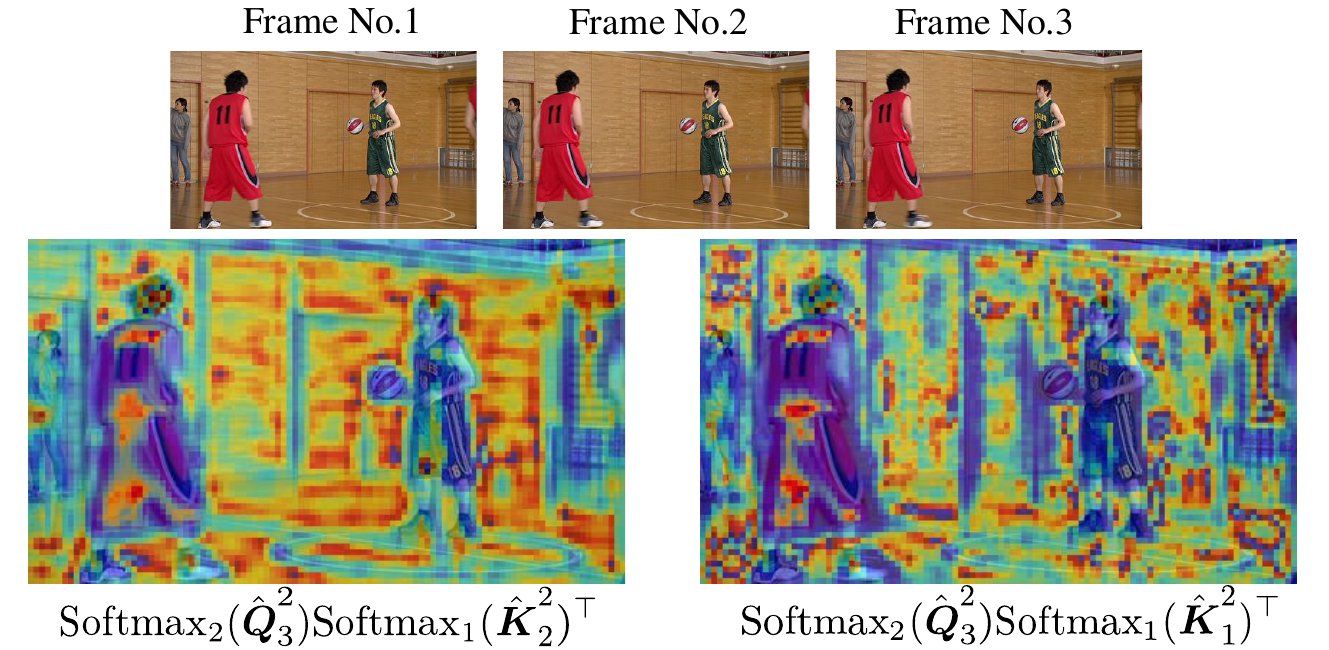}}
  \caption{Visualization of non-local correlations. The inputs are the first 3 frames in ``BasketballPass\_416x240\_50".}
  \label{fig:nlc}
\end{figure}
\textbf{Visualization of captured non-local correlations}
To demonstrate the effectiveness of the proposed MNLC in capturing non-local correlations, we visualize its attention maps in Figure~\ref{fig:nlc}.
For each visualization, we select a query and compute its attention scores with all keys.
Redder colors indicate higher attention scores.
It is obvious the MNLC captures the non-local contexts between walls and floors (the texture of walls and floors are similar).
The captured non-local contexts are employed as priors for conditional encoding, thereby saving the bit-rate for inter coding.\par
\textbf{Influences of finetuning frames}
With the proposed PCFS, the model can be finetuned with long sequences under limited computational resources.
To further analyze the PCFS,
we compare the performance of fine-tuning with sequences of different lengths.
The results are reported in Table~\ref{tab:frame_ablation}.
The frame numbers are $\{6,20,38,55\}$ with $\{1,1,2,3\}$ groups, respectively.
When involving more frames for finetuning, the performance is further improved under both IP $-1$ and IP $32$ scenarios.
Notably, for sequences with smaller movements (HEVC E~\cite{bossen2013common}), the gain of using more frames for fine-tuning is substantial.
The performance improvements under both IP $-1$ and IP $32$ settings as the frame number 
increases can be attributed to the diverse error patterns encountered during fine-tuning, 
which enhance ECVC's generalization capability.
\section{Conclusion}
In this paper, we demonstrate the effectiveness of exploiting 
non-local correlations for learned video compression.
To extract more temporal priors from multiple frames, we propose the Multiple Frame Non-Local Context Mining approach. The offset diversity, successive flow warping and multi-scale refinement are employed to capture
local correlations across multiple frames, while multi-head linear cross attention is
employed to capture non-local correlations among them.
To reduce the temporal error accumulation, we introduce the partial cascaded finetuning 
strategy to optimize the model under limited resources.
However, while ECVC achieves SOTA performance, it may perform worse than 
VTM on out-of-domain sequences (\textit{e.g.}, anime videos) because ECVC is trained on natural videos~\cite{xue2019video,ma2021bvi}. 
To address this issue, we will investigate the 
instance-adaptive optimization techniques~\cite{vaninstance,xu2023bit,tang2024offline,lu2020content} in the future.
{
    \small
    \bibliographystyle{ieeenat_fullname}
    \bibliography{main}
}
\clearpage
\appendix

\section{Network Structure}
Our ECVC is based on DCVC-DC~\cite{li2023neural}, but focuses more on exploiting non-local correlations in multiple frames
to boost the rate-distortion performance.
Here we describe 
 implementation details in the introduced components to DCVC-DC.
\subsection{Embedding Layers and Depth-wise Res Block}
In the multi-head linear cross attention layer, the depth-wise res block are employed for embedding and point-wise interactions.
The structure of the depth-wise res block is depicted in Figure~\ref{fig:dwrb}.
\subsection{Multi-Scale Refine Module}
The architecture of multi-scale refine module is depicted in Figure~\ref{fig:mlrf}.
\begin{figure}
    \centering
    \subfloat{
      \includegraphics[scale=0.31]{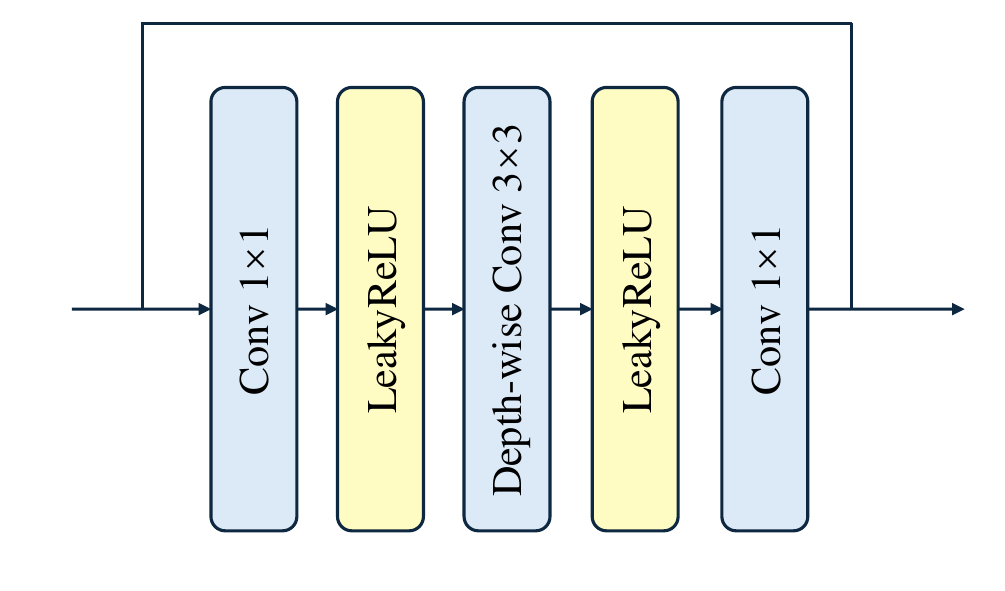}}
    \caption{Architecture of the depth-wise Res Block.}
    \label{fig:dwrb}
  \end{figure}
  \begin{figure*}
    \centering
    \subfloat{
      \includegraphics[scale=0.44]{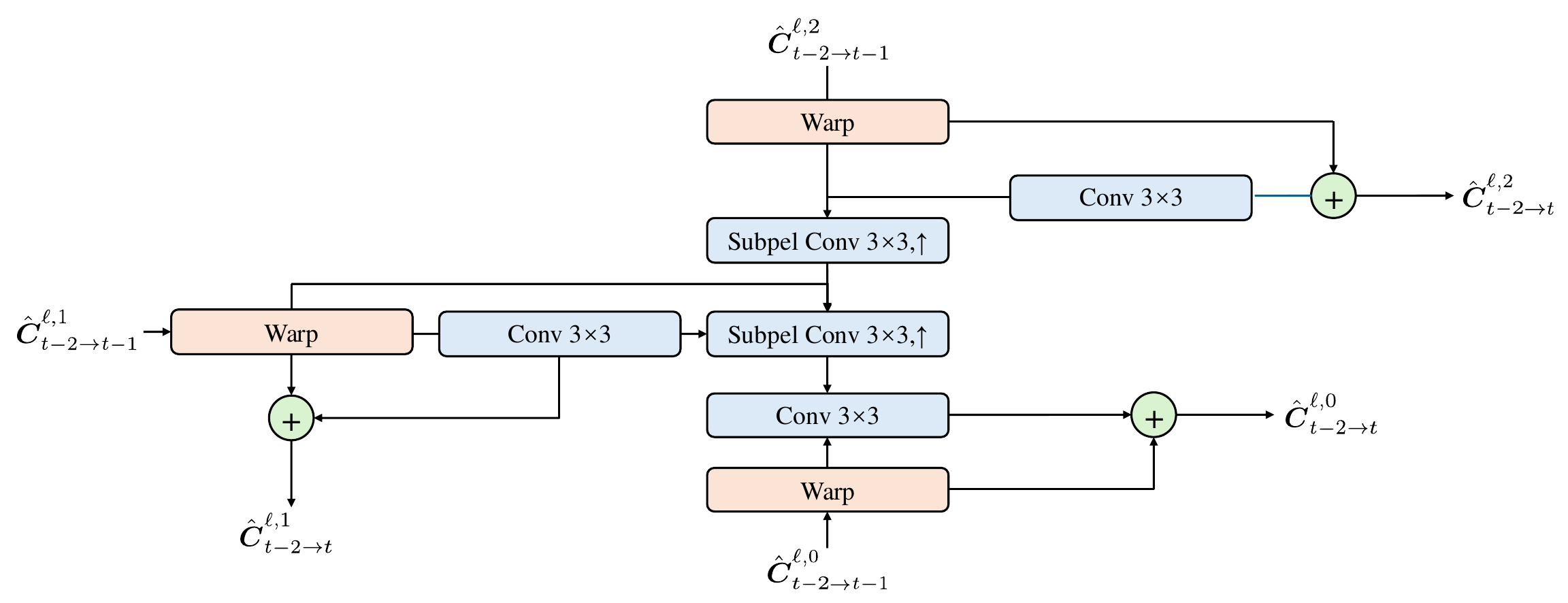}}
    \caption{Architecture of multi-scale refine module. ``Subpel Conv'' means sub pixel convolution.}
    \label{fig:mlrf}
  \end{figure*}
\subsection{Channel numbers}
The channel numbers of multi-scale features are $\{d^0_f, d^1_f, d^2_f\} = \{48, 64, 96\}$.\par
The channel numbers of mid-features during encoding are $\{d^0_e, d^1_e, d^2_e\} = \{3, 64, 96\}$.\par
The channel numbers of mid-features during decoding are $\{d^0_d, d^1_d, d^2_d\} = \{32, 64, 96\}$.
\section{ECVC-FM}
We introduce ECVC-FM, a new variant of ECVC that adopts DCVC-FM~\cite{li2024neural} as its backbone while incorporating key techniques from ECVC.
Additionally, we modify the channel dimensions of temporal contexts, resulting in a model with $53.05$M parameters.
\section{Test Settings}
To conduct comparisons, 
we compare the LVCs and traditional codecs in both RGB color space.
BT.601 is employed to convert the frames in YUV color space to RGB color space.
To obtain the RD data of VTM-13.2 LDB~\cite{bross2021overview}, following commands are employed:
\noindent\begin{minipage}{\linewidth}
    \begin{verbatim}
    EncoderApp 
    -c encoder_lowdelay_vtm.cfg 
    --InputFile={input file name} 
    --BitstreamFile={bitstream file name} 
    --DecodingRefreshType=2 
    --InputBitDepth=8 
    --OutputBitDepth=8 
    --OutputBitDepthC=8 
    --InputChromaFormat=444 
    --FrameRate={frame rate} 
    --FramesToBeEncoded={frame number} 
    --SourceWidth={width} 
    --SourceHeight={height}
    --IntraPeriod={IP}
    --QP={qp} 
    --Level=6.2
    \end{verbatim}
    \end{minipage}\par
For DCVC-DC and DCVC-FM~\cite{li2024neural}, we use the official weights and code to evaluate the performance.
\begin{table*}[ht]   
  \centering
  \setlength{\tabcolsep}{2.6mm}{
  \small
  \renewcommand\arraystretch{1.1}
    \begin{threeparttable}{
  \begin{tabular}{@{}cccccccccccccc@{}}
  \toprule
  \multicolumn{1}{c}{\multirow{2}{*}{Method}} & \multicolumn{1}{c|}{\multirow{2}{*}{{Venue}}}                            & \multicolumn{7}{c}{{BD-Rate (\%) w.r.t. VTM-13.2 LDB~\cite{bross2021overview}}}  \\
  \multicolumn{1}{c}{}      & \multicolumn{1}{c|}{}                                                & \multicolumn{1}{c}{HEVC B}    & \multicolumn{1}{c}{HEVC C}  & \multicolumn{1}{c}{HEVC D} & \multicolumn{1}{c}{HEVC E}  & \multicolumn{1}{c}{UVG~\cite{mercat2020uvg}} & \multicolumn{1}{c}{MCL-JCV~\cite{wang2016mcl}} & \multicolumn{1}{c}{Average} \\       \midrule                                      
  \multicolumn{1}{c}{DCVC-TCM~\cite{sheng2022temporal}}     & \multicolumn{1}{c|}{TMM'22}                                  & $+28.5$       & $+60.5$ & $+27.8$       & $+67.3$  & $+17.1$ & $+30.6$   & $+38.6$ \\
  \multicolumn{1}{c}{DCVC-HEM~\cite{li2022hybrid}}    & \multicolumn{1}{c|}{ACMMM'22}                                       & $-5.1$       & $+15.0$ & $-8.9$       & $+7.1$   & $-18.2$& $-6.4$ & $-2.8$ \\
  \multicolumn{1}{c}{DCVC-DC~\cite{li2023neural}}      & \multicolumn{1}{c|}{CVPR'23}                                     & ${-17.4}$       & $-9.8$ & ${-29.0}$       & $-26.0$  & ${-30.0}$  & ${-20.0}$ & ${-22.0}$ \\
  \multicolumn{1}{c}{DCVC-FM~\cite{li2024neural}}        & \multicolumn{1}{c|}{CVPR'24}                                   & $-12.5$       & ${-10.3}$ &  $-26.5$    & ${-26.9}$  & $-24.0$  & $-12.7$  & $-18.8$\\\midrule
  \multicolumn{1}{c}{DCVC-DC$^{*}$ }      & \multicolumn{1}{c|}{Reproduced}   & $-20.5$  & $-6.8$ & $-27.1$ & $-12.1$ & $-29.0$ & $-19.3$ & $-19.1$\\
  \multicolumn{1}{c}{ECVC }      & \multicolumn{1}{c|}{Ours}   & $\bm{-28.3}$     & $\bm{-19.6}$ & $\bm{-36.7}$      & $\bm{-27.1}$ & $\bm{-37.6}$ & $\bm{-26.3}$ & $\bm{-29.3}$   \\
  \multicolumn{1}{c}{ECVC-FM}      & \multicolumn{1}{c|}{Ours}   & $\bm{-30.9}$     & $\bm{-22.9}$ & $\bm{-40.7}$      & $\bm{-23.4}$ & $\bm{-42.5}$ & $\bm{-33.6}$ & $\bm{-32.3}$   \\\bottomrule  
\end{tabular}}
\begin{tablenotes}    
  \footnotesize               
  \item[1] The quality indexes of DCVC-FM are set to match the bit-rate range of DCVC-DC.
  \item[2] Please note that the ECVC is based on our reproduced DCVC-DC$^{*}$ since training scripts of DCVC series are not open-sourced.
\end{tablenotes}       
\end{threeparttable}}
\caption{{BD-Rate $(\%)$~\cite{bjontegaard2001calculation} comparison for PSNR (dB). The anchor is \textbf{VTM-13.2 LDB}. \textbf{The Intra Period is $\bm{32}$ with $\bm{96}$ frames}.}}
\label{tab:supp_rd}   
\end{table*}
\begin{table*}
  \centering
  \setlength{\tabcolsep}{2.5mm}{
  \small
  \renewcommand\arraystretch{1}
  \begin{threeparttable}{
  \begin{tabular}{@{}cccccccccccccc@{}}
  \toprule
  \multicolumn{1}{c}{\multirow{2}{*}{Method}} & \multicolumn{1}{c|}{\multirow{2}{*}{{Venue}}}                            & \multicolumn{7}{c}{{BD-Rate (\%) w.r.t. VTM-13.2 LDB~\cite{bross2021overview}}}  \\
  \multicolumn{1}{c}{}      & \multicolumn{1}{c|}{}                                                & \multicolumn{1}{c}{HEVC B}    & \multicolumn{1}{c}{HEVC C}  & \multicolumn{1}{c}{HEVC D} & \multicolumn{1}{c}{HEVC E} & \multicolumn{1}{c}{UVG~\cite{mercat2020uvg}} & \multicolumn{1}{c}{MCL-JCV~\cite{wang2016mcl}} & \multicolumn{1}{c}{Average} \\       \midrule             
  \multicolumn{1}{c}{DCVC-TCM~\cite{sheng2022temporal}}     & \multicolumn{1}{c|}{TMM'22}                                  & $-20.5$       & $-21.7$ & $-36.2$       & $-20.5$   & $-6.0$ & $-18.6$ & $-20.6$    \\
  \multicolumn{1}{c}{DCVC-HEM~\cite{li2022hybrid}}    & \multicolumn{1}{c|}{ACMMM'22}                                       & $-47.4$       & $-43.3$ & $-55.5$       & $-52.4$   & $-32.7$& $-44.0$ & $-45.9$ \\
  \multicolumn{1}{c}{DCVC-DC~\cite{li2023neural}}      & \multicolumn{1}{c|}{CVPR'23}                                     & ${-53.0}$       & ${-54.6}$ & ${-63.4}$       & ${-60.7}$  & ${-36.7}$  & ${-49.1}$ & ${-52.9}$ \\\midrule
  \multicolumn{1}{c}{DCVC-DC$^{*}$}      & \multicolumn{1}{c|}{Reproduced} & ${-53.5}$  & ${-52.8}$ & ${-61.6}$ & ${-48.8}$ & ${-39.1}$ & ${-52.2}$& ${-51.3}$\\
  \multicolumn{1}{c}{ECVC }      & \multicolumn{1}{c|}{Ours}    & $\bm{-57.7}$     & $\bm{-58.2}$ & $\bm{-65.6}$      & $\bm{-60.5}$ & $\bm{-42.7}$ & $\bm{-54.9}$ & $\bm{-56.6}$ 
  \\\bottomrule  
\end{tabular}}
\begin{tablenotes}    
  \footnotesize               
  \item[1] The MS-SSIM~\cite{wang2003multiscale} optimized weights of DCVC-FM are not open-sourced.   
\end{tablenotes}       
\end{threeparttable}}
\caption{{BD-Rate $(\%)$~\cite{bjontegaard2001calculation} comparison for MS-SSIM~\cite{wang2003multiscale}. The anchor is \textbf{VTM-13.2 LDB}. \textbf{The intra period is $\bm{32}$ with $\bm{96}$ frames}.}}
\label{tab:supp_rd_ip32_ssim} 
\end{table*}
\begin{table*}[ht]
  \centering
  \small
  \setlength{\tabcolsep}{2.5mm}{
  \renewcommand\arraystretch{1}
  \begin{tabular}{@{}cccccccccccccc@{}}
  \toprule
  \multicolumn{1}{c}{\multirow{2}{*}{Method}} & \multicolumn{1}{c|}{\multirow{2}{*}{{Venue}}}                            & \multicolumn{7}{c}{{BD-Rate (\%) w.r.t. VTM-13.2 LDB~\cite{bross2021overview}}}  \\
  \multicolumn{1}{c}{}      & \multicolumn{1}{c|}{}                                                & \multicolumn{1}{c}{HEVC B}    & \multicolumn{1}{c}{HEVC C}  & \multicolumn{1}{c}{HEVC D} & \multicolumn{1}{c}{HEVC E} & \multicolumn{1}{c}{UVG~\cite{mercat2020uvg}} & \multicolumn{1}{c}{MCL-JCV~\cite{wang2016mcl}} & \multicolumn{1}{c}{Average} \\       \midrule                                      
  \multicolumn{1}{c}{DCVC-TCM~\cite{sheng2022temporal}}    & \multicolumn{1}{c|}{TMM'22}                                       & $+55.4$       & $+97.4$ & $+50.0$       & $+214.2$   & $+60.4$& $+50.7$ & $+88.0$ \\
  \multicolumn{1}{c}{DCVC-HEM~\cite{li2022hybrid}}    & \multicolumn{1}{c|}{ACMMM'22}                                       & $+3.9$       & $+28.4$ & $-1.2$       & $+66.3$   & $+0.5$& $+1.7$ & $+16.6$ \\
  \multicolumn{1}{c}{DCVC-DC~\cite{li2023neural}}        & \multicolumn{1}{c|}{CVPR'23}                                   & $-11.0$       & $+0.2$ &  $-23.9$    & $-7.8$  & $-21.0$  & $-13.0$  & $-12.8$\\
  \multicolumn{1}{c}{DCVC-FM~\cite{li2024neural}}        & \multicolumn{1}{c|}{CVPR'24}                                   & $-11.7$       & $-7.9$ &  $-28.2$    & $-25.8$  & $-23.9$  & $-12.3$  & $-18.3$\\\midrule
  \multicolumn{1}{c}{DCVC-DC$^{*}$ }      & \multicolumn{1}{c|}{Reproduced}  & $-9.2$ & $+6.9$ & $-20.0$ & $+45.1$  & $-12.5$ & $-11.7$   & $-0.2$ \\
  \multicolumn{1}{c}{ECVC }      & \multicolumn{1}{c|}{Ours}   & $\bm{-27.9}$    & $\bm{-18.9}$ &$\bm{-39.0}$  &$\bm{-26.4}$      & $\bm{-38.3}$ & $\bm{-27.7}$ & $\bm{-29.7}$   \\
  \multicolumn{1}{c}{ECVC-FM}      & \multicolumn{1}{c|}{Ours}   & $\bm{-30.9}$     & $\bm{-22.9}$ & $\bm{-40.7}$      & $\bm{-23.4}$ & $\bm{-42.5}$ & $\bm{-33.6}$ & $\bm{-32.3}$ \\\bottomrule  
\end{tabular}}
  \footnotesize               
\caption{{BD-Rate $(\%)$~\cite{bjontegaard2001calculation} comparison for PSNR (dB). The anchor is \textbf{VTM-13.2 LDB}. \textbf{The Intra Period is $\bm{-1}$ with $\bm{96}$ frames.}}}
\label{tab:supp_rd_ip-1} 
\end{table*}
\begin{table*}
  \centering
  \small
  \setlength{\tabcolsep}{3mm}{
  \renewcommand\arraystretch{1}
  \begin{tabular}{@{}cccccccccccccc@{}}
  \toprule
  \multicolumn{1}{c}{\multirow{2}{*}{Method}} & \multicolumn{1}{c|}{\multirow{2}{*}{{Venue}}}                            & \multicolumn{7}{c}{{BD-Rate (\%) w.r.t. VTM-13.2 LDB~\cite{bross2021overview}}}  \\
  \multicolumn{1}{c}{}      & \multicolumn{1}{c|}{}                                                & \multicolumn{1}{c}{HEVC B}    & \multicolumn{1}{c}{HEVC C}  & \multicolumn{1}{c}{HEVC D} & \multicolumn{1}{c}{HEVC E} & \multicolumn{1}{c}{UVG} & \multicolumn{1}{c}{MCL-JCV} & \multicolumn{1}{c}{Average} \\       \midrule                                      
  \multicolumn{1}{c}{DCVC-TCM~\cite{sheng2022temporal}}    & \multicolumn{1}{c|}{TMM'22}                                       & $+107.3$       & $+143.5$ & $+99.2$       & $+835.9$   & $+120.6$& $+63.7$ & $+228.4$ \\
  \multicolumn{1}{c}{DCVC-HEM~\cite{li2022hybrid}}    & \multicolumn{1}{c|}{ACMMM'22}                                       & $+22.8$       & $+32.3$ & $+13.4$       & $+236.9$   & $+33.5$& $+6.7$ & $+57.6$ \\
  \multicolumn{1}{c}{DCVC-DC~\cite{li2023neural}}        & \multicolumn{1}{c|}{CVPR'23}                                   & $-7.5$       & $+3.4$ &  $-12.0$    & $+83.9$  & $-4.5$  & $-12.9$  & $+8.4$\\
  \multicolumn{1}{c}{DCVC-FM~\cite{li2024neural}}        & \multicolumn{1}{c|}{CVPR'24}                                   & $-19.9$       & $-17.4$ &  $-25.7$    & $-24.5$  & $-22.5$  & $-13.4$  & $-20.6$\\\midrule
  \multicolumn{1}{c}{DCVC-DC$^{*}$ }      & \multicolumn{1}{c|}{Reproduced}  & $+11.9$ & $+11.6$ & $-3.6$ & $+446.4$  & $+16.2$ & $-9.8$   & $+78.8$ \\
  \multicolumn{1}{c}{ECVC }      & \multicolumn{1}{c|}{Ours}   & $\bm{-33.4}$    & $\bm{-29.5}$ &$\bm{-38.8}$  &$\bm{-23.5}$      & $\bm{-37.5}$ & $\bm{-29.7}$ & $\bm{-32.1}$   \\
  \multicolumn{1}{c}{ECVC-FM}      & \multicolumn{1}{c|}{Ours}   & $\bm{-33.5}$     & $\bm{-32.7}$ & $\bm{-41.3}$      & $\bm{-17.3}$ & $\bm{-41.3}$ & $\bm{-34.3}$ & $\bm{-33.4}$\\\bottomrule  
\end{tabular}}
  \footnotesize               
\caption{{BD-Rate $(\%)$~\cite{bjontegaard2001calculation} comparison for PSNR (dB). The anchor is \textbf{VTM-13.2 LDB}. \textbf{The Intra Period is $\bm{-1}$ with All frames.}}}
\label{tab:supp_rd_ip-1_all} 
\end{table*}
\section{Rate-Distortion Results}
We provide the peroformances of our baseline model, reproduced DCVC-DC$^{\ast}$ in Table~\ref{tab:supp_rd}, Table~\ref{tab:supp_rd_ip32_ssim}, Table~\ref{tab:supp_rd_ip-1} and Table~\ref{tab:supp_rd_ip-1_all}.
The bpp-PSNR curves are presented in Figure~\ref{fig:rd_ip32}, Figure~\ref{fig:rd_ip-1} and Figure~\ref{fig:rd_ip-1_all}.
The bpp-MS-SSIM curves are presented in Figure~\ref{fig:rd_ip32_ssim}.
\begin{figure*}[ht]
    \centering
    \subfloat{
      \includegraphics[scale=0.45]{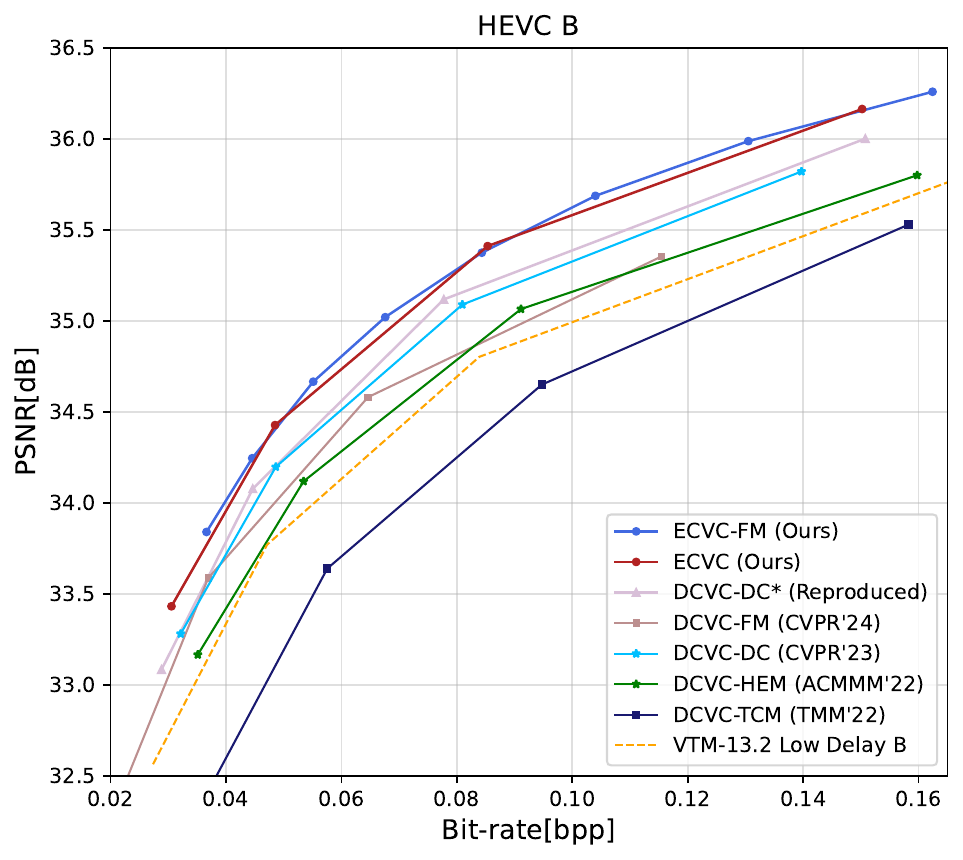}}
    \subfloat{
    \includegraphics[scale=0.45]{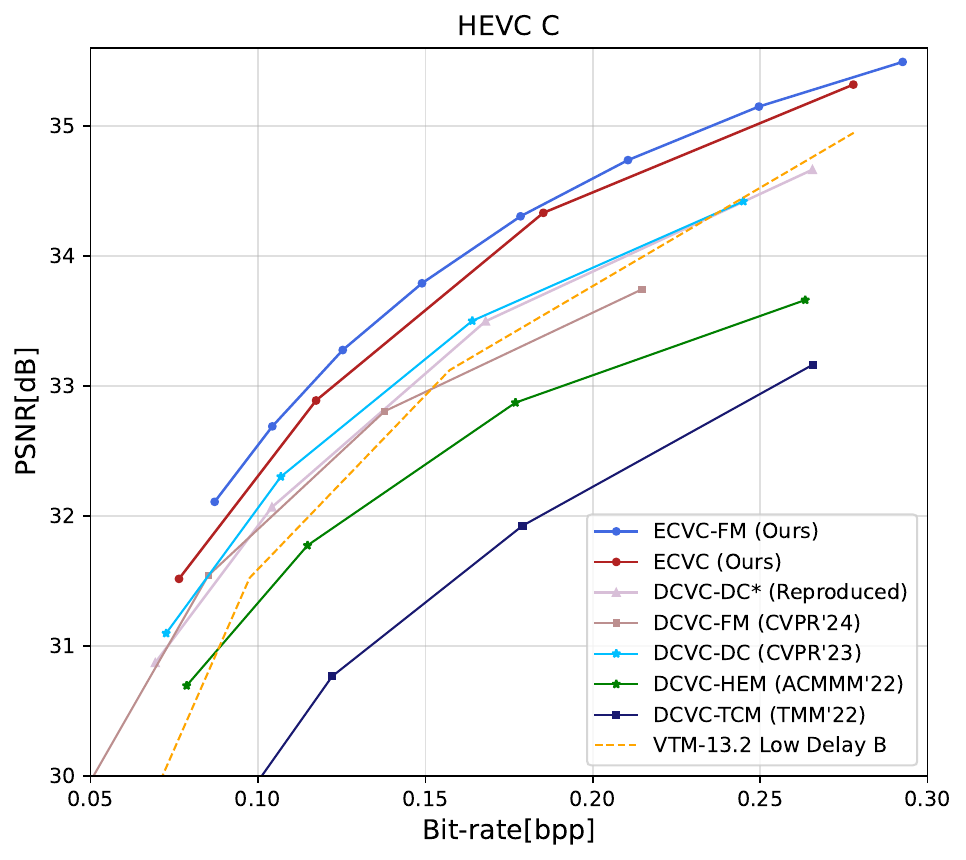}}\\
    \subfloat{
    \includegraphics[scale=0.45]{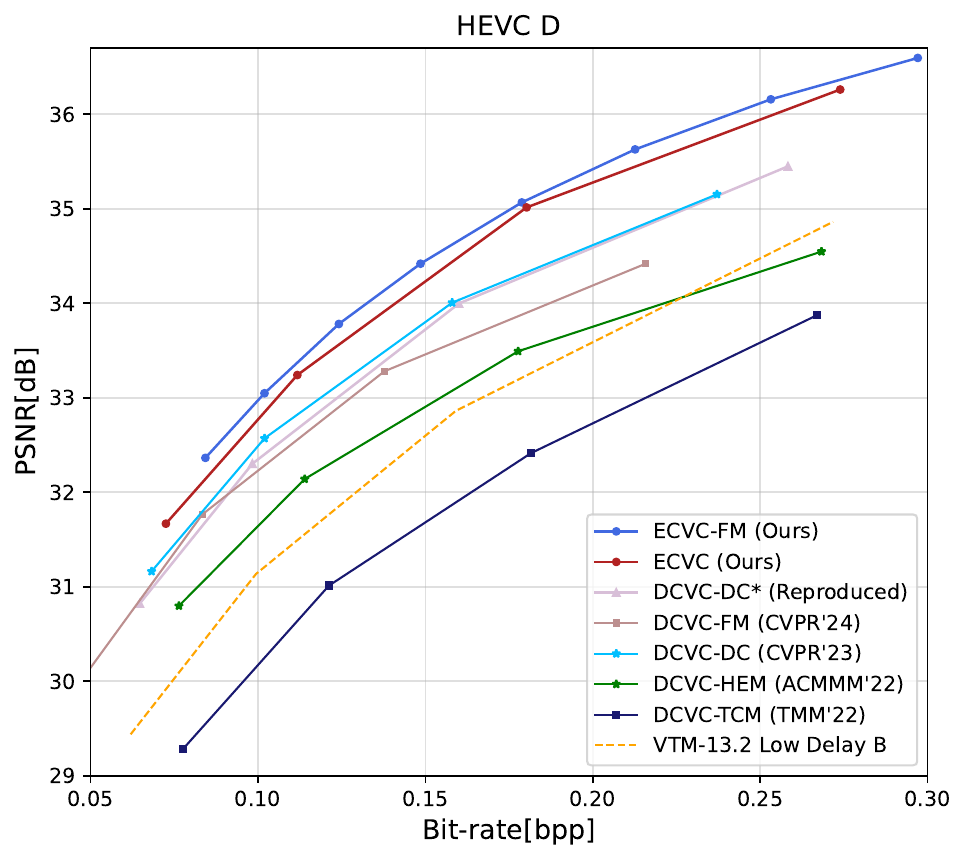}}
    \subfloat{
      \includegraphics[scale=0.45]{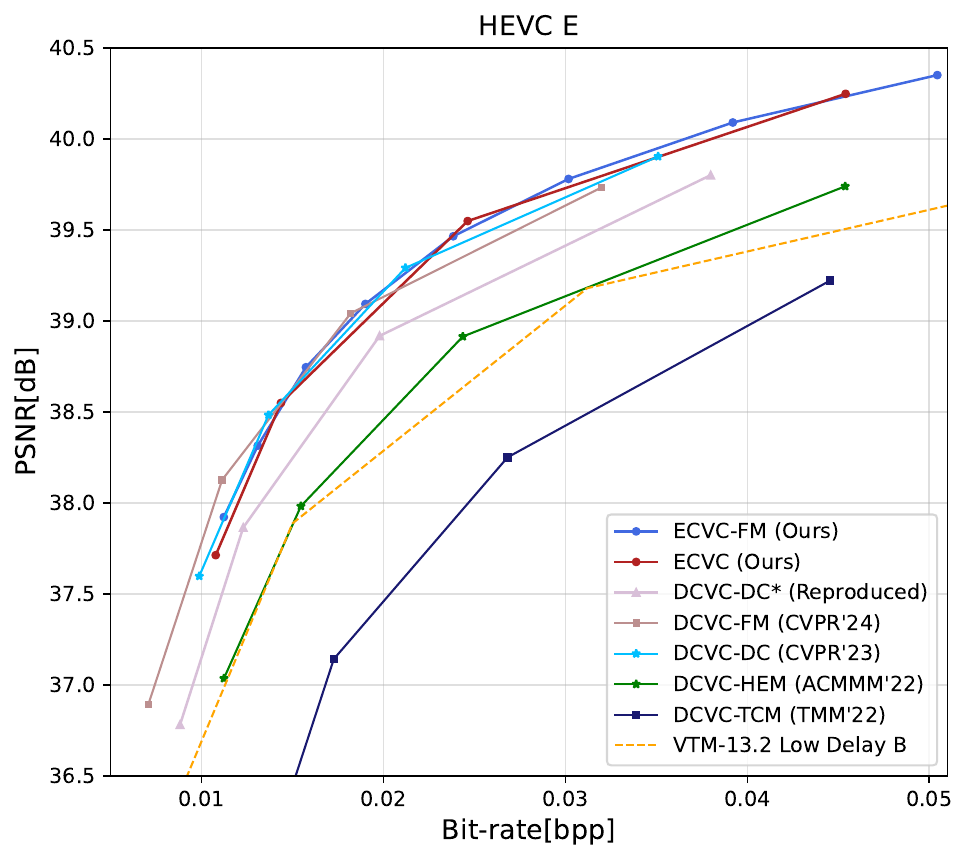}}\\
      \subfloat{
        \includegraphics[scale=0.45]{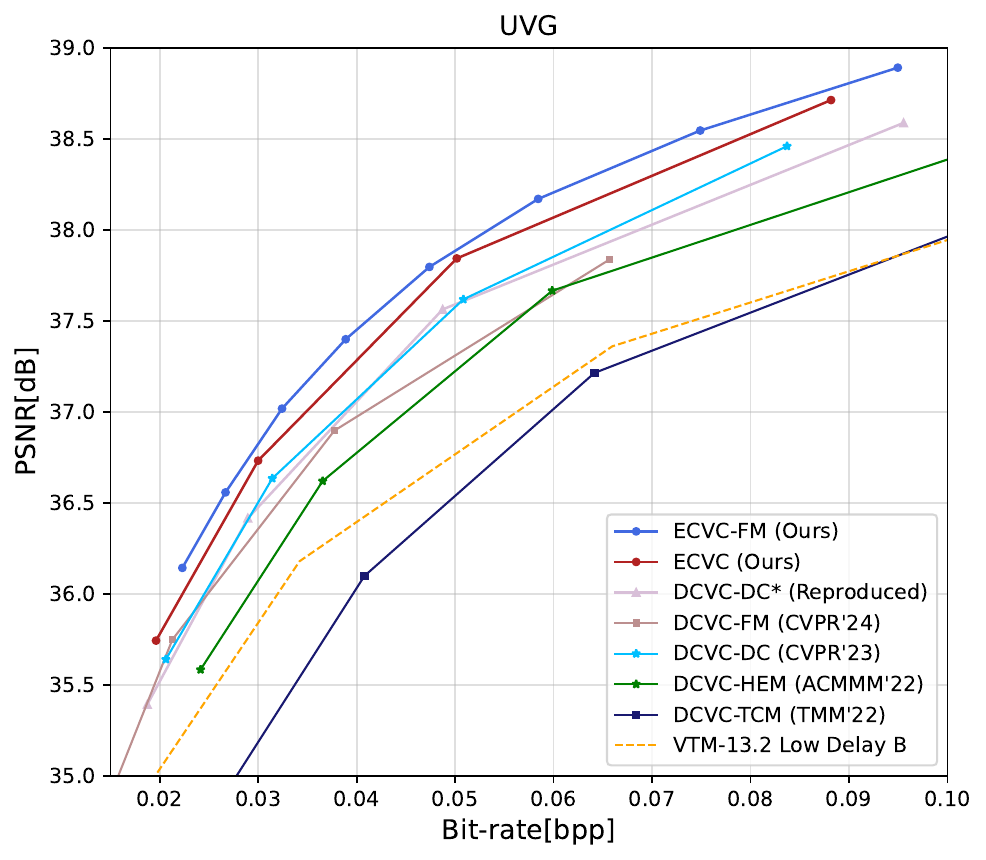}}
        \subfloat{
          \includegraphics[scale=0.45]{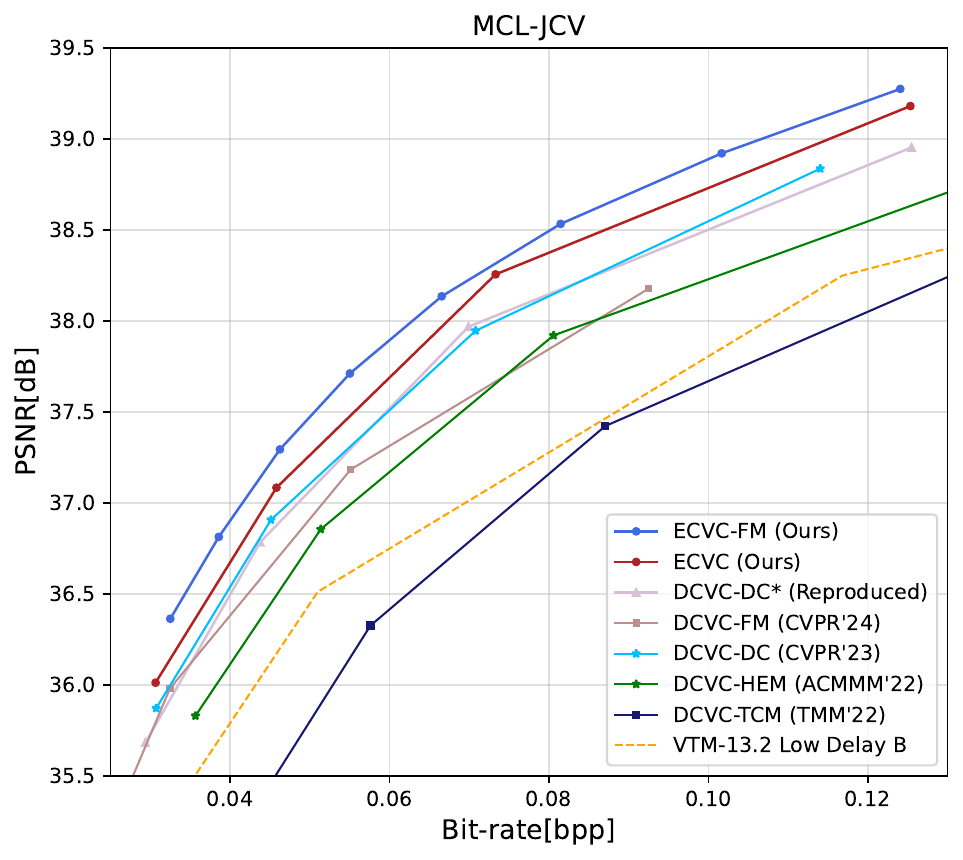}}\\
    \caption{Bpp-PSNR curves on HEVC B, C, D, E, UVG and MCL-JCV dataset. \textbf{The intra period is $\bm{32}$ with $\bm{96}$ frames}.}
    \label{fig:rd_ip32}
  \end{figure*}
  \begin{figure*}[ht]
    \centering
    \subfloat{
      \includegraphics[scale=0.45]{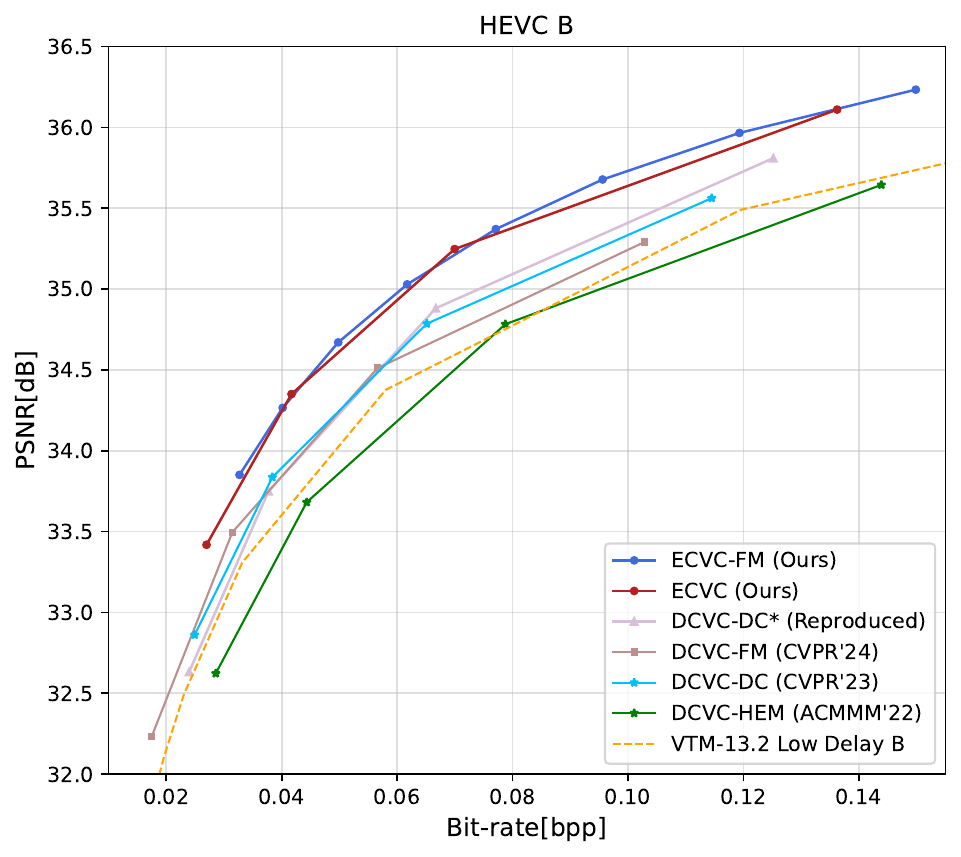}}
    \subfloat{
    \includegraphics[scale=0.45]{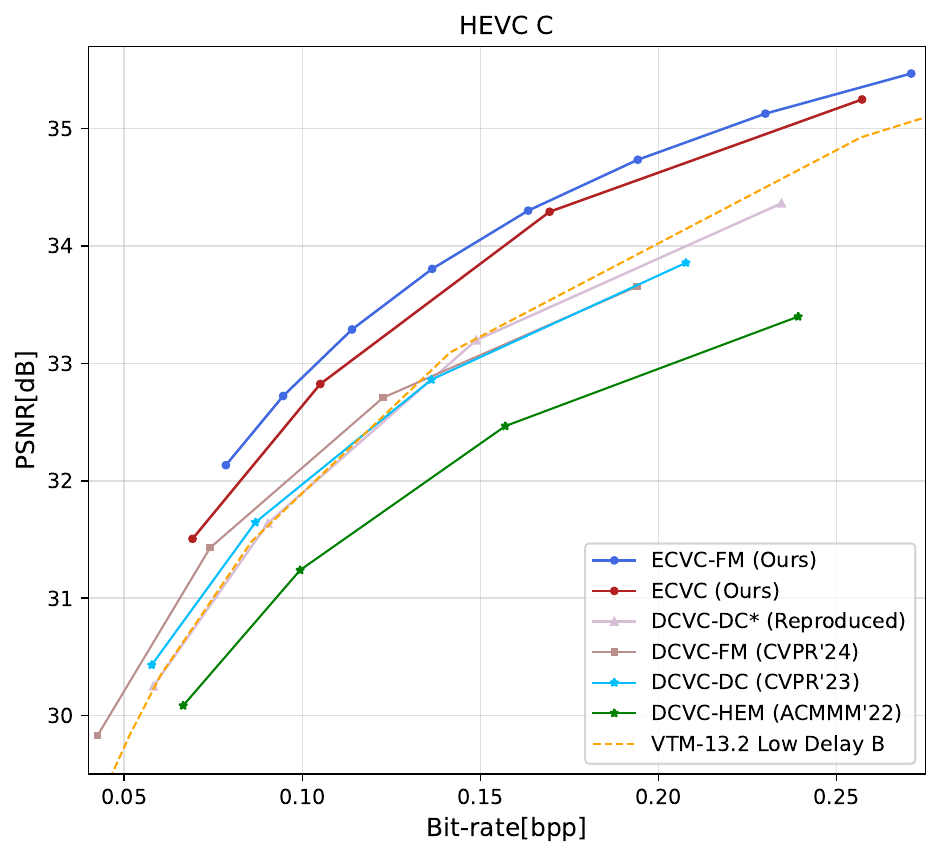}}\\
    \subfloat{
    \includegraphics[scale=0.45]{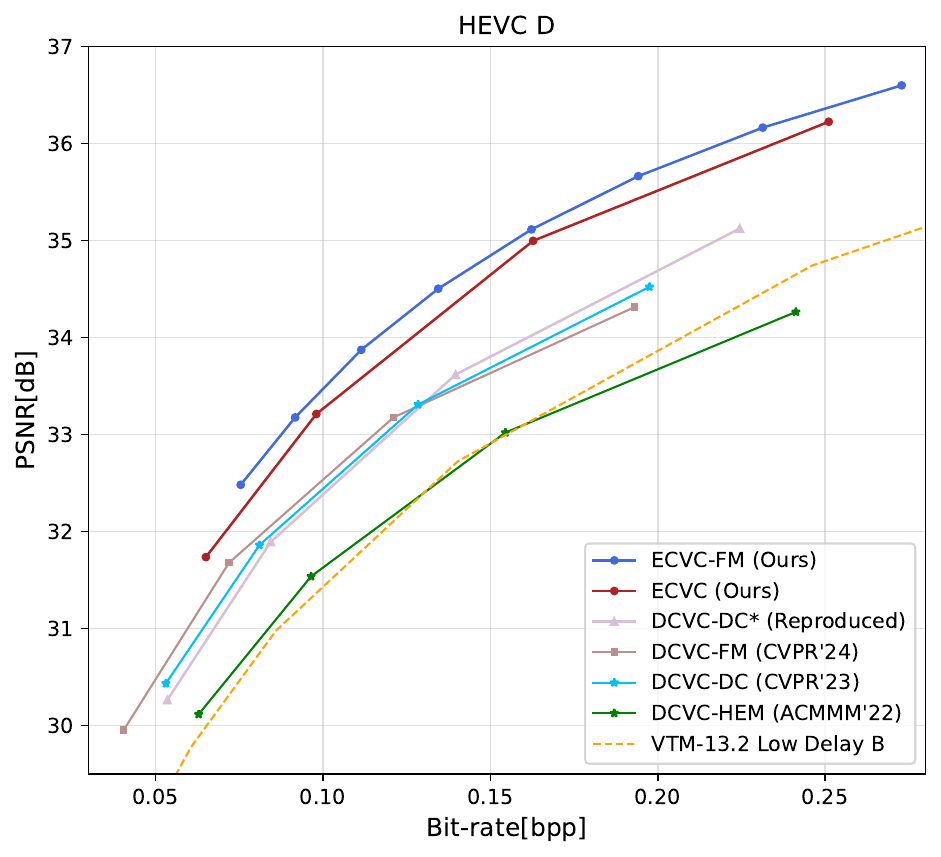}}
    \subfloat{
      \includegraphics[scale=0.45]{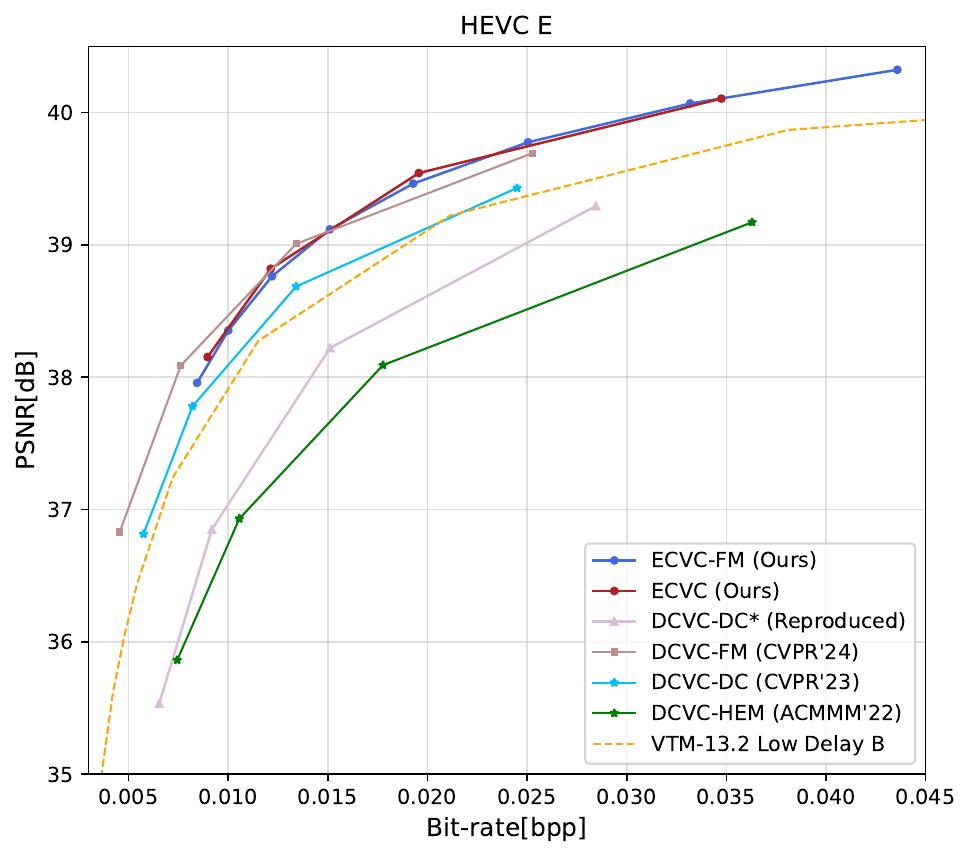}}\\
      \subfloat{
        \includegraphics[scale=0.45]{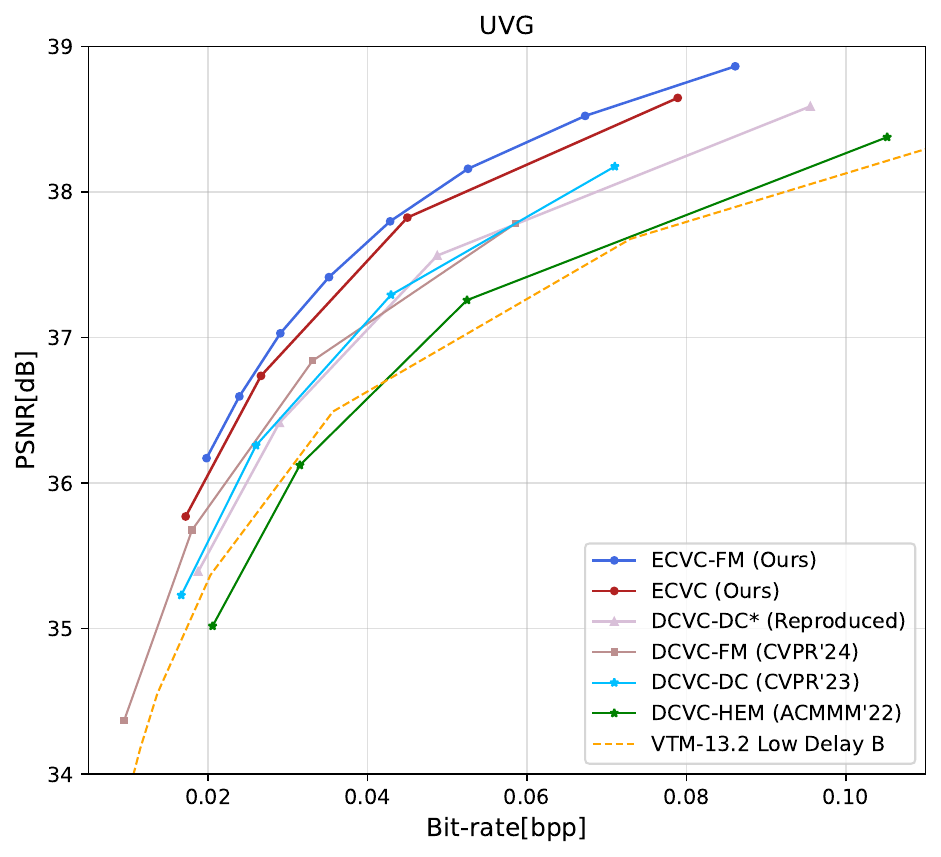}}
        \subfloat{
          \includegraphics[scale=0.45]{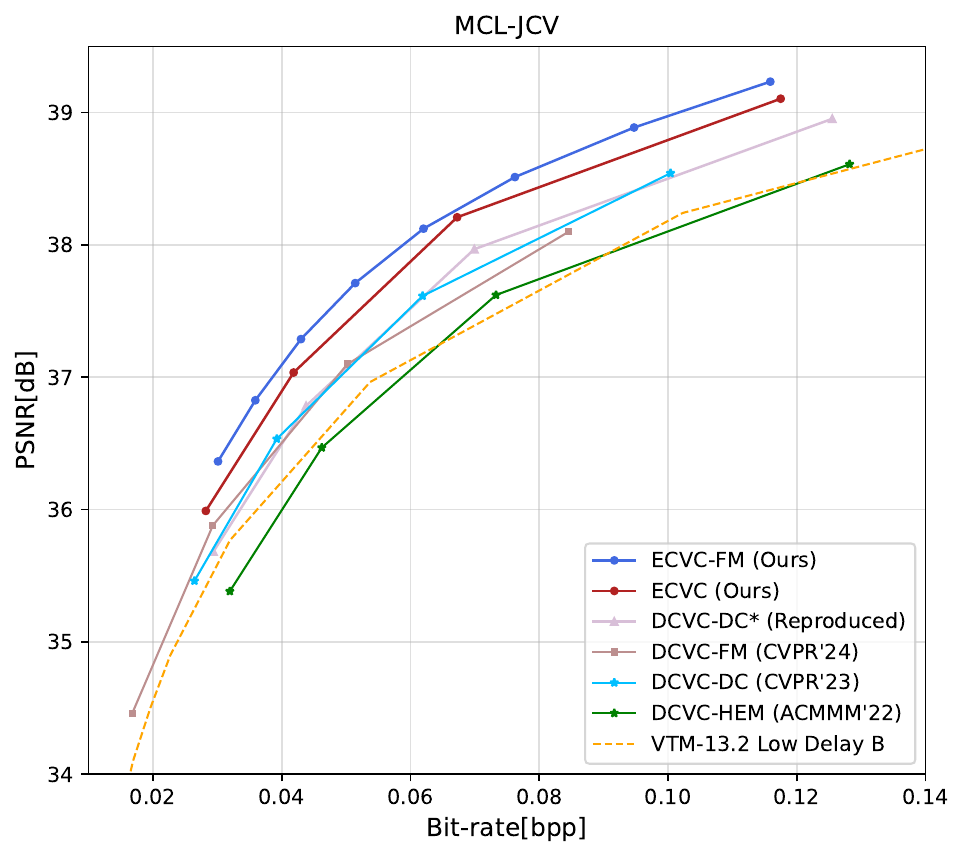}}\\
    \caption{Bpp-PSNR curves on HEVC B, C, D, E, UVG and MCL-JCV dataset. \textbf{The intra period is $\bm{-1}$ with $\bm{96}$ frames}.}
    \label{fig:rd_ip-1}
  \end{figure*}
  \begin{figure*}[ht]
    \centering
    \subfloat{
      \includegraphics[scale=0.45]{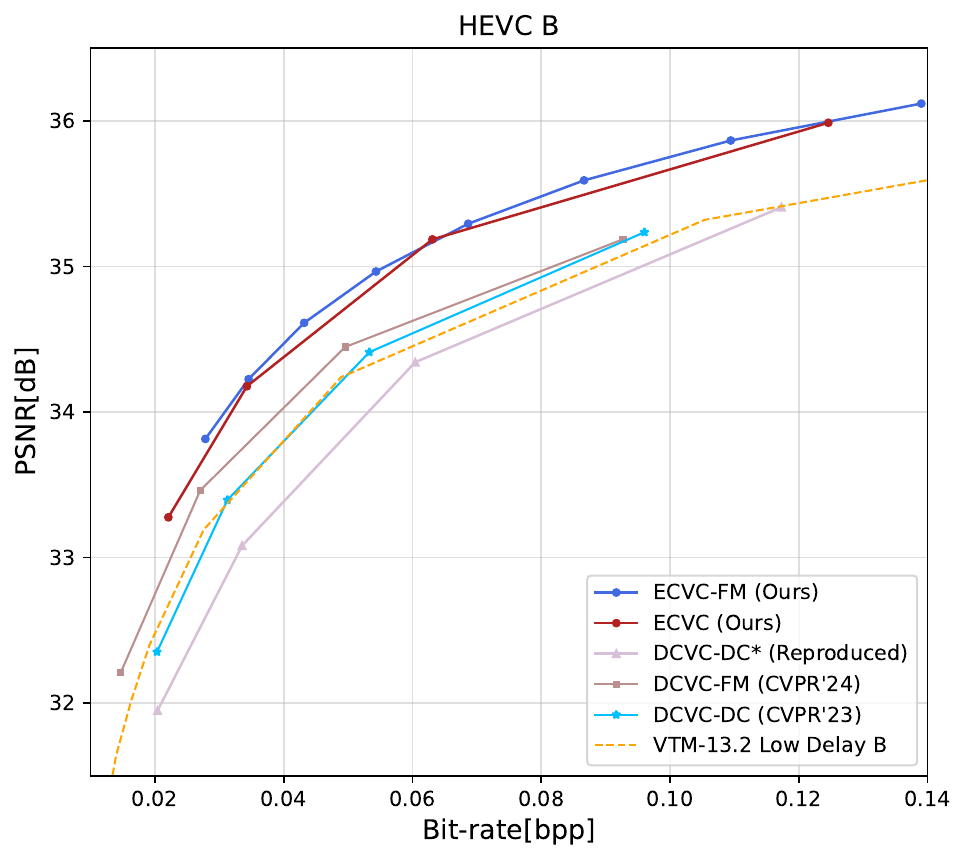}}
    \subfloat{
    \includegraphics[scale=0.45]{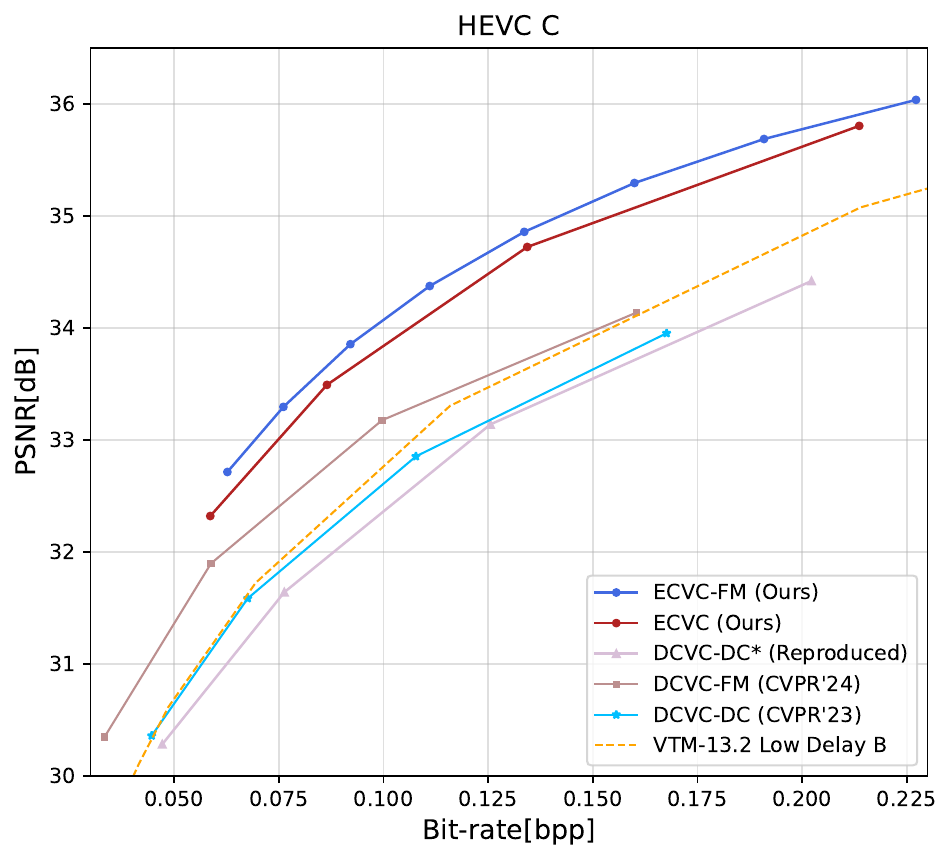}}\\
    \subfloat{
    \includegraphics[scale=0.45]{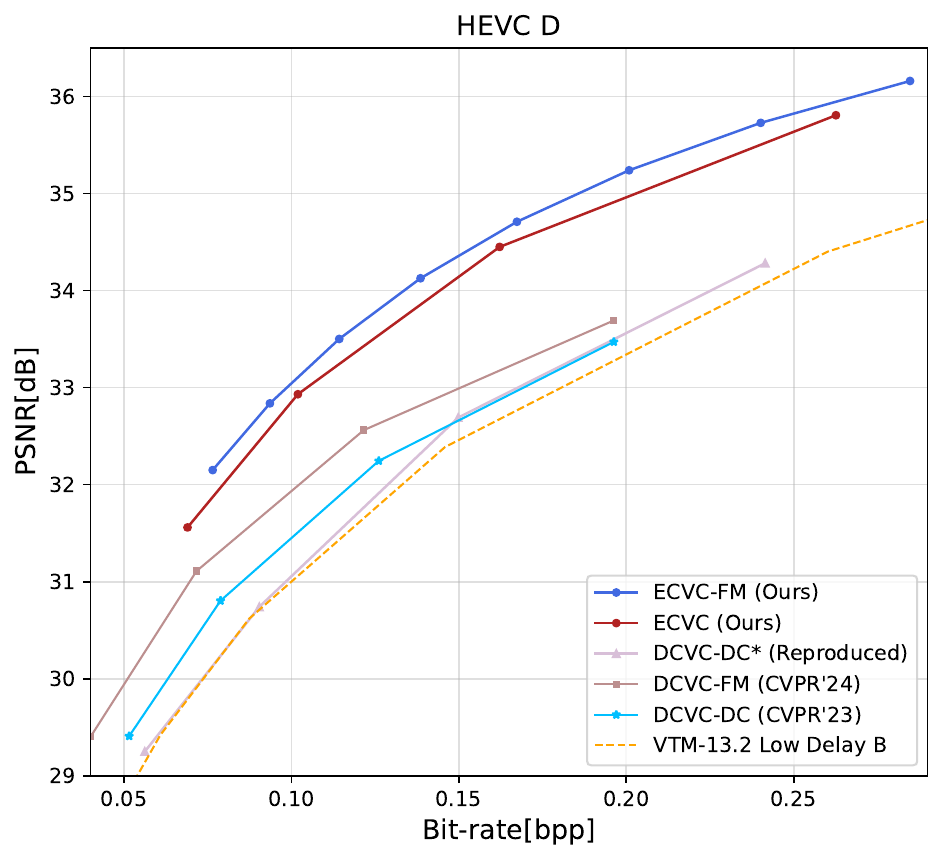}}
    \subfloat{
      \includegraphics[scale=0.45]{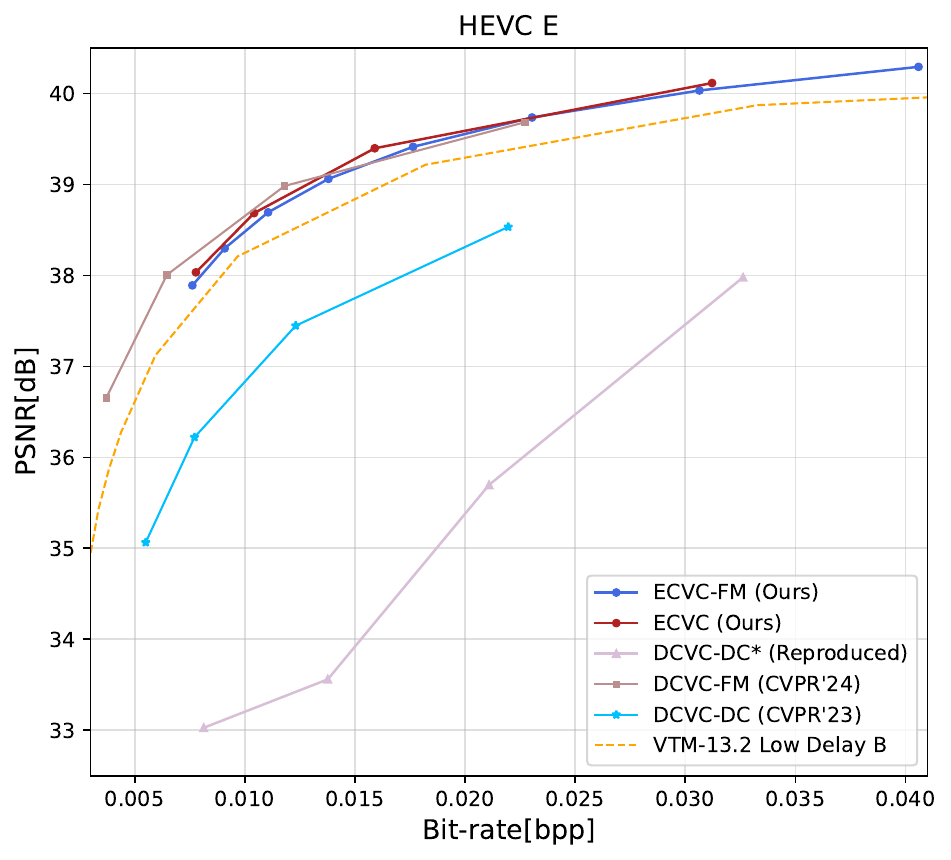}}\\
      \subfloat{
        \includegraphics[scale=0.45]{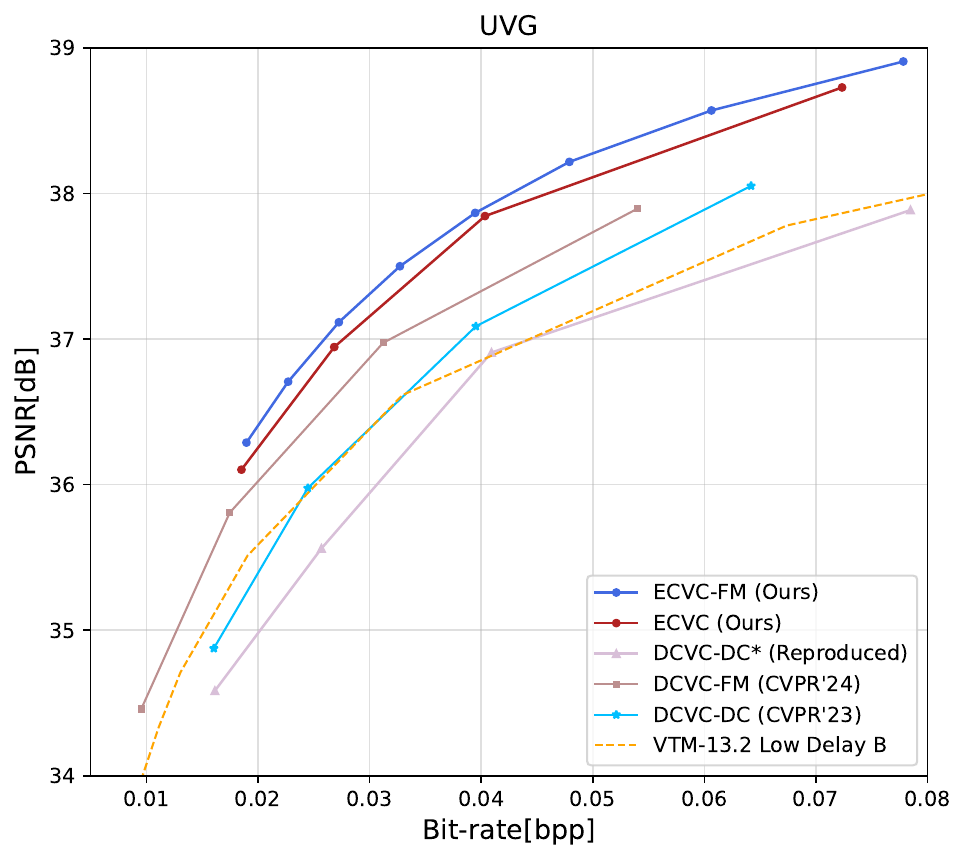}}
        \subfloat{
          \includegraphics[scale=0.45]{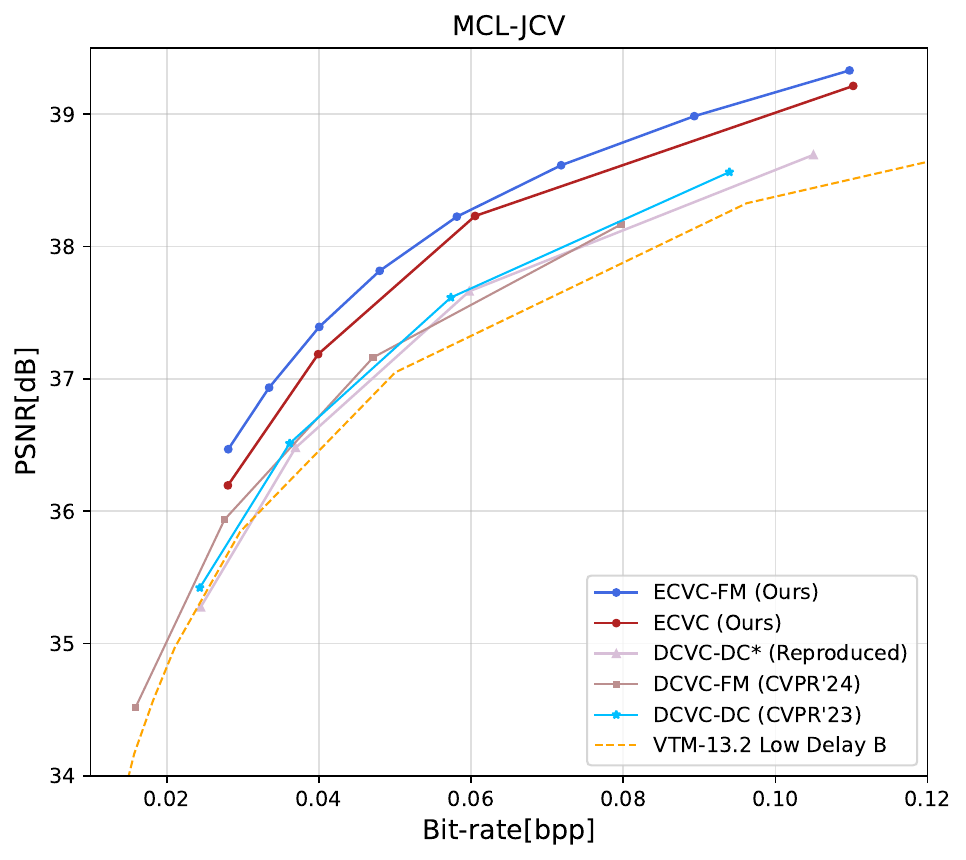}}\\
    \caption{Bpp-PSNR curves on HEVC B, C, D, E, UVG and MCL-JCV dataset. \textbf{The intra period is $\bm{-1}$ with All frames}.}
    \label{fig:rd_ip-1_all}
  \end{figure*}
  \begin{figure*}[ht]
    \centering
    \subfloat{
      \includegraphics[scale=0.45]{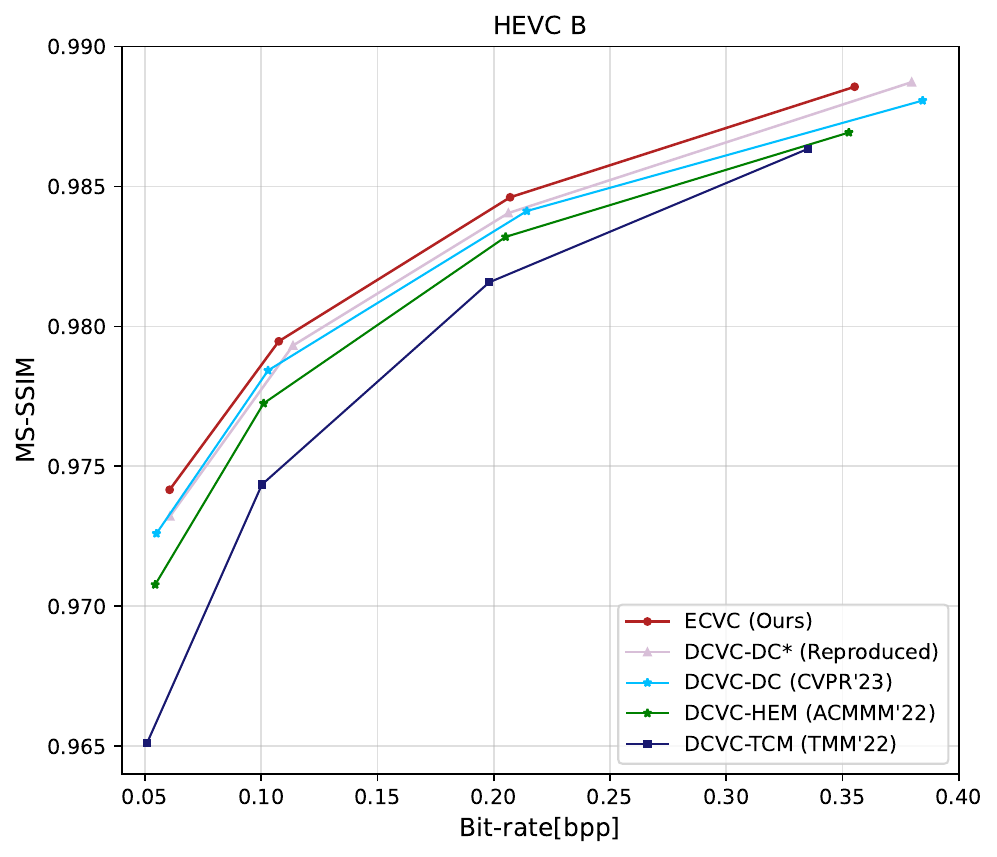}}
    \subfloat{
    \includegraphics[scale=0.45]{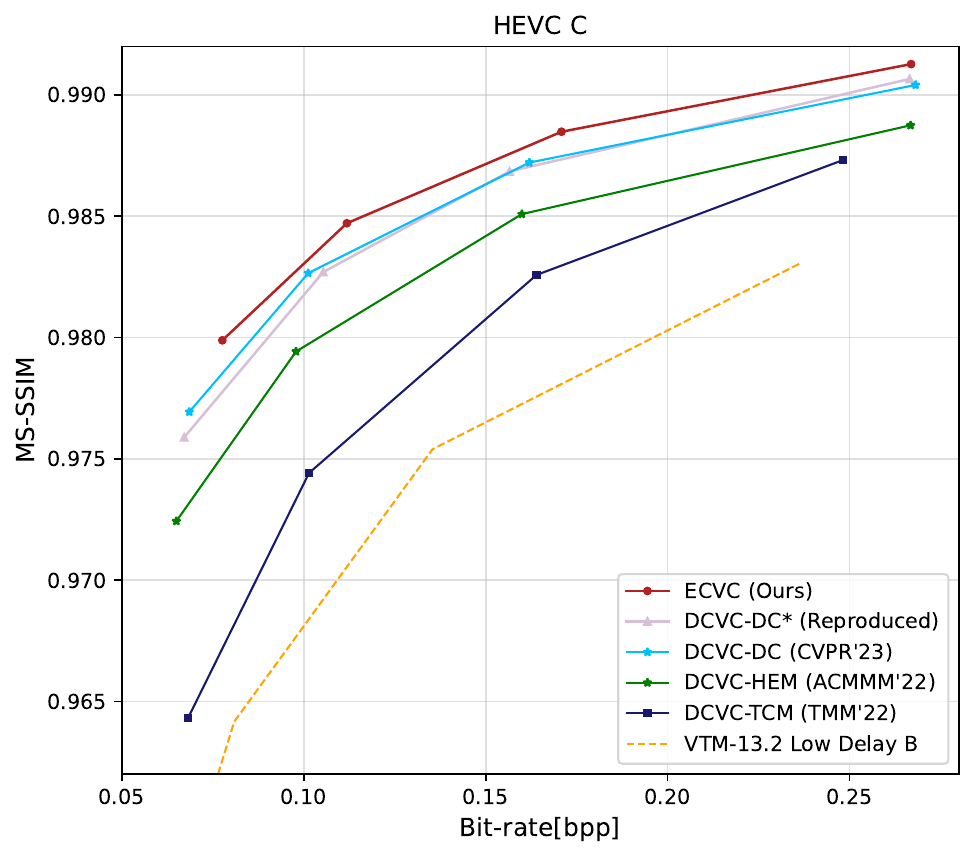}}\\
    \subfloat{
    \includegraphics[scale=0.45]{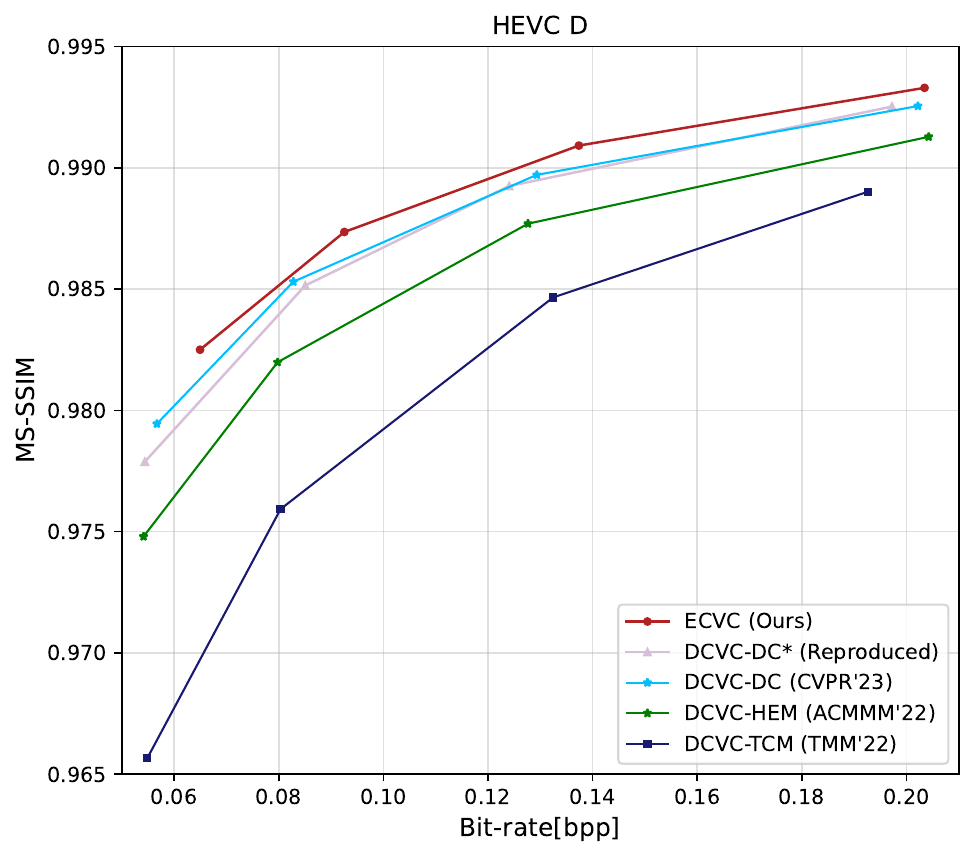}}
    \subfloat{
      \includegraphics[scale=0.45]{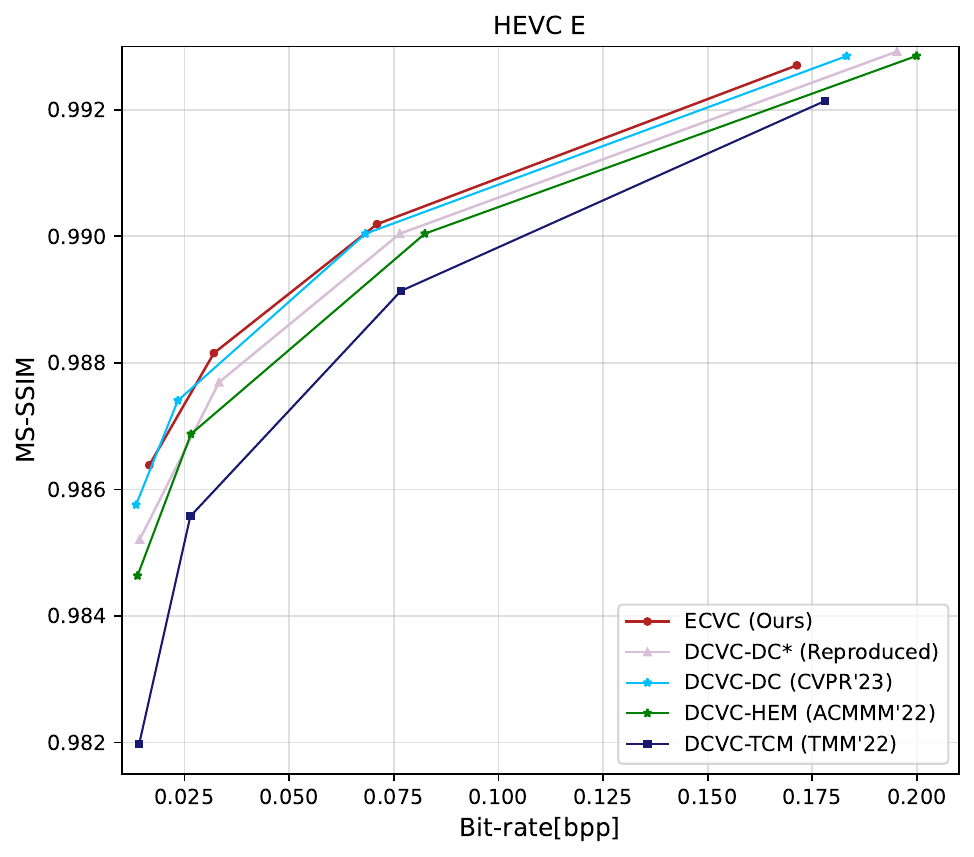}}\\
      \subfloat{
        \includegraphics[scale=0.45]{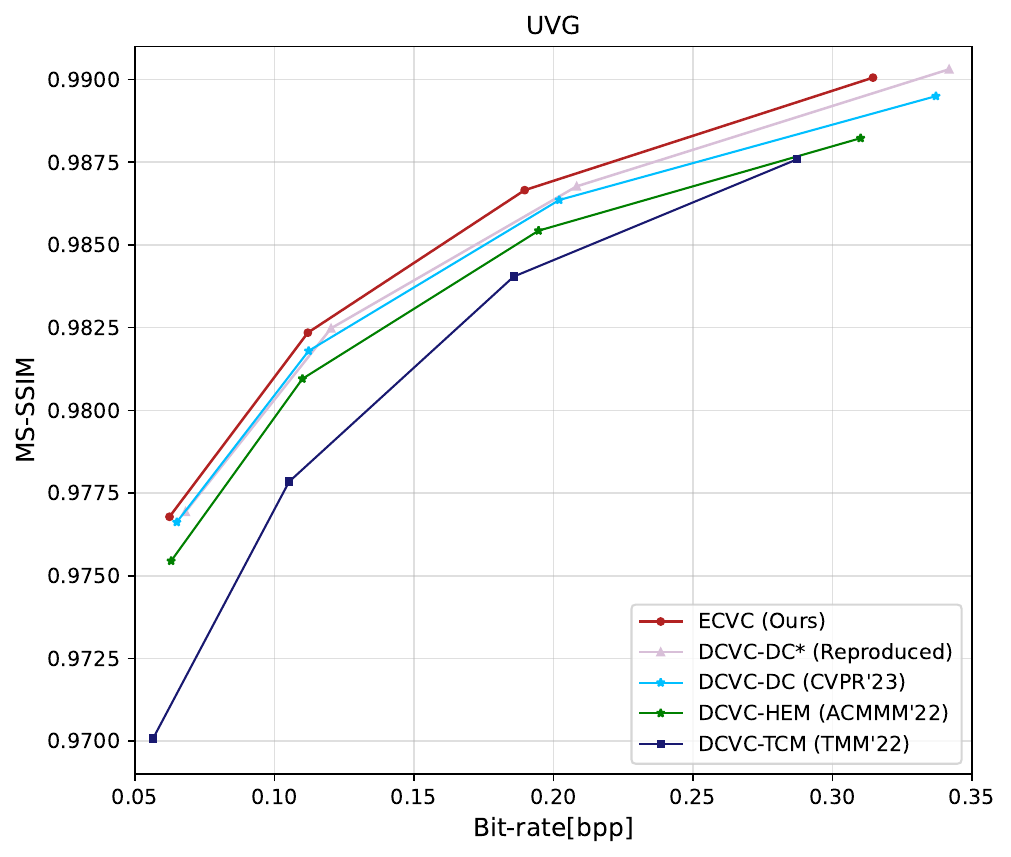}}
        \subfloat{
          \includegraphics[scale=0.45]{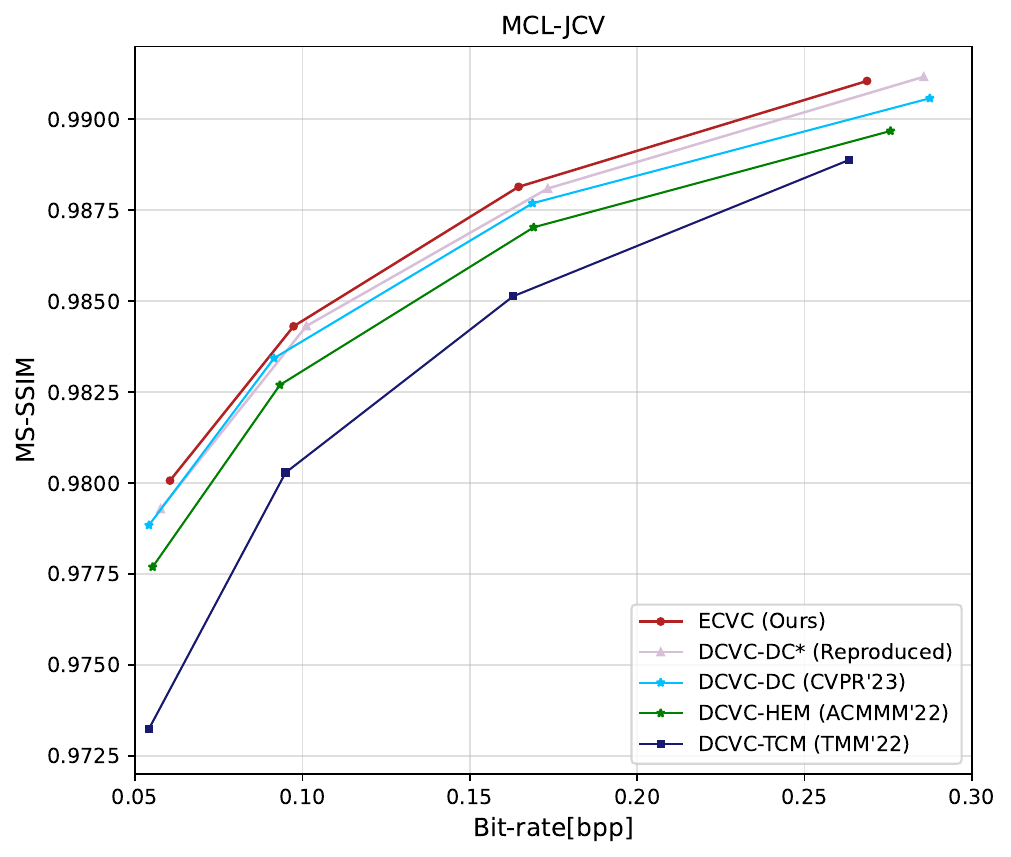}}\\
    \caption{Bpp-MS-SSIM curves on HEVC B, C, D, E, UVG and MCL-JCV dataset. \textbf{The intra period is $\bm{32}$ with 96 frames}.}
    \label{fig:rd_ip32_ssim}
  \end{figure*}

\end{document}